\def\BibTeX{{\rm B\kern-.05em{\sc i\kern-.025em b}\kern-.08em
    T\kern-.1667em\lower.7ex\hbox{E}\kern-.125emX}}
\renewcommand\footnoterule{\kern-3pt \hrule width 2in \kern 2.6pt}
\newcolumntype{L}[1]{>{\raggedright\let\newline\\\arraybackslash\hspace{0pt}}m{#1}}
\newcolumntype{C}[1]{>{\centering\let\newline\\\arraybackslash\hspace{0pt}}m{#1}}
\newcolumntype{R}[1]{>{\raggedleft\let\newline\\\arraybackslash\hspace{0pt}}m{#1}}
\newtheorem{theorem}{Theorem}
\newtheorem{lemma}{Lemma}
\newtheorem{definition}{Def.}
\newcommand{\blue}[1]{{{\color{blue} #1}}}
\begin{document}
\pagestyle{plain}
\pagenumbering{arabic}

\title{
Timing Analysis and Priority-driven Enhancements of ROS 2 Multi-threaded Executors
}

\author{
	\IEEEauthorblockN{Hoora Sobhani$^{\star}$\qquad Hyunjong Choi$^{\dagger}$\qquad Hyoseung Kim$^{\star}$}
	\IEEEauthorblockA{$^{\star}$University of California, Riverside\\$^{\dagger}$San Diego State University}
	\IEEEauthorblockA{hsobh002@ucr.edu, hyunjong.choi@sdsu.edu, hyoseung@ucr.edu}
}

\maketitle

\thispagestyle{firstpage}   

\begin{abstract}
The second generation of Robotic Operating System, ROS 2, has gained much attention for its potential to be used for safety-critical robotic applications. The need to provide a solid foundation for timing correctness and scheduling mechanisms is therefore growing rapidly. 
Although there are some pioneering studies conducted on formally analyzing the response time of processing chains in ROS 2, the focus has been limited to single-threaded executors, and multi-threaded executors, despite their advantages, have not been studied well. 
To fill this knowledge gap, in this paper, we propose a comprehensive response-time analysis framework for chains running on ROS 2 multi-threaded executors. We first analyze the timing behavior of the default scheduling scheme in ROS 2 multi-threaded executors, and then present priority-driven scheduling enhancements to address the limitations of the default scheme. Our framework can analyze chains with both arbitrary and constrained deadlines and also the effect of mutually-exclusive callback groups. 
Evaluation is conducted by a case study on  NVIDIA Jetson AGX Xavier and schedulability experiments using randomly-generated chains. The results demonstrate that our analysis framework can safely upper-bound response times under various conditions and the priority-driven scheduling enhancements not only reduce the response time of critical chains but also improve analytical bounds. 
\end{abstract}


\section{Introduction}
The Robotic Operating System (ROS) is an open-source middleware framework that has been widely used for robotic systems in academia and industry. The software modularity and composability of ROS have helped the community achieve efficient and productive robotic software developments. However, the architecture limitations and several deep-rooted shortcomings of ROS had been unveiled over the decades, resulting in the development of its second generation, ROS 2, which is a complete refactoring of the previous version.

One of the major considerations in ROS 2 has been improving real-time capabilities while inheriting the successful concepts of its predecessor. As an example, to support real-time data distribution, ROS 2 employs the Data Distribution Service (DDS) as the underlying communication framework. Although ROS 2 has been shown to provide better real-time support for robotic systems, it is yet incomplete to be applicable to hard real-time or safety-critical applications. To guarantee stringent timing constraints in these applications, designers need to safely upper-bound the end-to-end latency (i.e., response time) of \textit{processing chains}. Although there are many prior studies on the response-time analysis of chains, the unique scheduling behavior of ROS 2 calls for new formal modeling and analysis of its timing abstractions and scheduling architecture.

The pioneers in formally analyzing the response time of chains on ROS 2 are \cite{Casini_ECRTS19, Tang_RTSS20, PICAS}. As mentioned in these studies, ROS 2 introduces ``executors'' as the abstraction of operating system (OS) processes, providing two built-in types: \textit{single-threaded} and \textit{multi-threaded}. A single-threaded executor executes callbacks sequentially, while a multi-threaded executor distributes pending callbacks across multiple threads (i.e., callbacks can execute in parallel). These studies focus on the response-time analysis of callbacks and chains {\it only} on single-threaded executors. 
Specifically, \cite{Casini_ECRTS19} and \cite{Tang_RTSS20} mapped a single-threaded executor to a single reservation server to derive analysis; \cite{PICAS} proposed priority-driven scheduling and executor-to-core allocation but for single-threaded executors. 

As of yet, the scheduling behavior of ROS 2 multi-threaded executors has not been studied well. However, plenty of studies in the real-time systems area have demonstrated that multi-threading improves system concurrency and throughput by effectively utilizing multiple processors while preserving timing correctness, e.g.,  real-time multi-threading in self-driving cars~\cite{KIM_ICCPS13}. Therefore, in this paper, we aim to analyze and improve the timing behavior of ROS 2 multi-threaded executors. But, the tremendous amount of non-determinism in multi-threaded executors, such as an unpredictable distribution of callbacks across threads and unsynchronized polling points of threads, makes the analysis particularly challenging. In addition, the lack of systematic support for chain priority in ROS 2 prevents the effective utilization of parallel resources, resulting in delayed processing of critical chains.

This paper tackles the aforementioned issues. We first present a response-time analysis (RTA) framework for chains running on ROS 2 multi-threaded executors. To improve the end-to-end response time of critical chains, we also propose priority-driven scheduling enhancements that make the executor strictly respect the priority of the corresponding chain when scheduling individual callbacks. These enhancements bring significant benefits in timing analysis as well as observed performance on a real platform. The detailed contributions of our work are as follows:



\begin{itemize}

\item We discuss difficulties in analyzing the timing behavior of chains on multi-threaded executors (Sec.~\ref{challenges}). In particular, we redefine the properties of two ROS-specific scheduling behaviors, polling points and processing windows, and explain why the latest single-threaded analysis based on them is not applicable to multi-threaded executors.

\item We develop an RTA framework for ROS 2 multi-threaded executors (Sec.~\ref{RTA}). 
Our analysis considers chains with both constrained and arbitrary deadlines, and upper-bounds the response time of chains executed by multi-thread executors. 
We also analyze the effects of {\em callback groups} that are used to control the concurrency of select callbacks.

\item We propose priority-driven scheduling enhancements and the corresponding analytical extensions to our RTA framework (Sec. \ref{RTA}). The priority-driven scheduling approach mitigates the aforementioned non-determinism issues and the resulting analytical pessimism and helps reduce the response time of critical chains.


\item For evaluation, we performed a case study movitated by autonomous driving software on an embedded platform as well as schedulability experiments using randomly-generated workloads (Sec. \ref{EVAL}). The results support the effectiveness of our RTA framework and demonstrate how priority-driven scheduling enhancements improve both observed and computer upper-bounds on the response time of chains
\end{itemize}

\section{Related Work}
Many studies have been conducted on improving real-time capabilities~\cite{saito2018rosch, wei2016rt} and evaluating the empirical real-time performance of ROS~\cite{gutierrez2018towards, maruyama2016exploring}. In~\cite{wei2016rt}, Wei et al. proposed to run two OSes on the same platform, i.e., real-time ROS nodes on Nuttx (RTOS) and non-real-time ones on Linux, to provide isolated execution environments. Saito et al.~\cite{saito2018rosch} developed ROSCH-G, which is a real-time extension to ROS with a CPU/GPU coordination mechanism provided as a loadable kernel module.
Carlos et al.~\cite{gutierrez2018towards} measured the worst-case latency between two nodes and observed deadline miss behavior in a Linux system with the PREEMPT-RT patch. In~\cite{maruyama2016exploring}, the effects of various QoS configurations under different vendor-specific DDS implementations were evaluated empirically. However, some of these studies were conducted on the first generation of ROS~\cite{saito2018rosch, wei2016rt} and the others did not consider formal modeling or analysis of ROS 2~\cite{gutierrez2018towards, maruyama2016exploring}.

Formal timing analysis of end-to-end latency has recently received much attention for processing chains that follow either the publisher-subscriber or read-execute-write model. Davare et al.~\cite{davare2007period} and Schlatow et al.~\cite{schlatow2016response} captured an upper bound on the end-to-end latency of a chain based on the worst-case response time of individual tasks. In~\cite{abdullah2019worst,becker2016synthesizing,kloda2018latency}, the authors proposed analytical methods to bound the end-to-end latency of a chain for fixed-priority scheduling. Choi et al.~\cite{choi20chain} focused on improving the end-to-end latency of chains and proposed chain-based fixed-priority scheduling. However, these approaches cannot be directly applied to ROS 2 due to discrepancies in the scheduling model.

The literature on the response-time analysis of processing chains on ROS 2 executors is quite limited. The first work that formally analyzed the timing behavior of ROS 2 executors is the work by  Casini et al.~\cite{Casini_ECRTS19}. They unveiled the details of the ROS 2 callback scheduling policy implemented in single-threaded executors and presented the response-time analysis of callbacks and processing chains on single-threaded executors.
Tang et al.~\cite{Tang_RTSS20} proposed an improved response-time analysis by characterizing the details of processing windows (we will revisit this in Sec.~\ref{challenges}), and studied the effect of callback priorities under the standard ROS 2 scheduling policy that results in round-robin-like behavior~\cite{tang2021response}.
In \cite{PICAS}, Choi et al. developed a priority-driven chain-aware scheduler, called PiCAS, which modifies the default ROS 2 policy to strictly follow assigned chain priorities. They also presented callback priority assignment, allocation of nodes to executors and executors to CPU cores, and response-time analysis under their scheduler. Unlike PiCAS which uses fixed-priority scheduling, Arafat et al.~\cite{arafat2022response} proposed a dynamic-priority scheduling scheme to improve chain latency, especially in an overloaded scenario. Teper at el.~\cite{teper2022end} focused on cause-effect chains where intermediate callbacks can be released independently by their own timers, and proposed an end-to-end latency analysis of cause-effect chains in ROS 2. However, all of these focus on single-threaded executors, none on multi-threaded executors.

Both single-threaded and multi-threaded ROS 2 executors have yet another important feature called {\em callback groups} that have not been considered in the prior analysis work~\cite{Casini_ECRTS19,Tang_RTSS20,PICAS, rage, arafat2022response, teper2022end}. While the authors of \cite{yang20exploring} explored the effect of callback groups in single-threaded Micro-ROS executors (a variant of ROS 2 for microcontrollers), they did not link this to formal timing analysis. In this paper, we take this into account in our RTA framework. 

Recently, Jiang et al. \cite{jiang2022real} presented response-time analysis for processing chains with constrained deadlines running on a ROS 2 multi-threaded executor. While the reader might find it similar to our work, we make unique contributions in that our work analyzes chains with both constrained and arbitrary deadlines and provides priority-driven scheduling enhancements and the corresponding analytical extensions to ROS 2 multi-threaded executors.

\section{Background and System Model}
In this section, we briefly review the ROS 2 architecture and introduce our system model.

\begin{figure}[t]
\centerline{\includegraphics[width=1\linewidth]{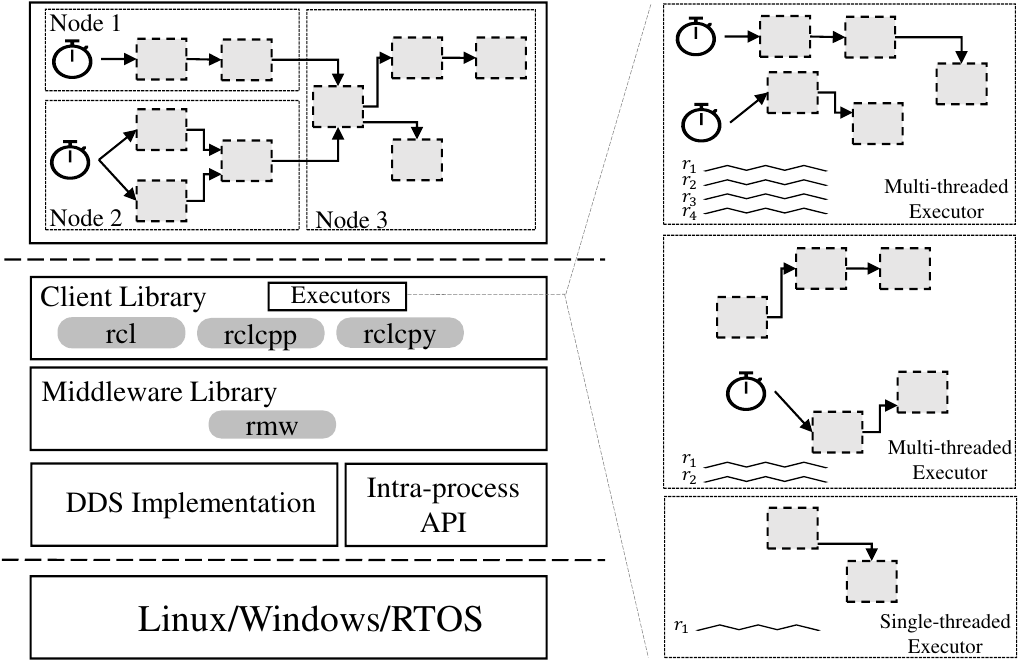}}
\caption{ROS 2 architecture and application model}
\label{fig1}
\end{figure}

\subsection{ROS 2 Architecture}\label{sec:architecture}

Fig.~\ref{fig1} illustrates the ROS 2 middleware architecture that sits on top of the OS and provides a set of libraries and tools for robotic application development. In particular, the ROS~2 client library (rcl) consists of language-specific libraries (rclcpp and rclpy for C++ and Python, respectively), and the middleware library (rmw) provides publisher-subscriber interfaces for Data Distribution Service (DDS) and intra-process communication.

ROS 2 applications consist of \textit{nodes}, each of which in turn consists of a set of \textit{callbacks}. ROS 2 callbacks are categorized into four types: timer callbacks are triggered when the timer period is up;  subscription callbacks are triggered when a subscribed message arrives; service and client callbacks are triggered by a service request and a response to a service request, respectively. There is an inherent priority order among these callback types such that timer callbacks have the highest and client callbacks have the lowest priority~\cite{Casini_ECRTS19,Tang_RTSS20,PICAS, rage}. For callbacks of the same type, their priorities are implicitly determined by declaration order in the node.
While callbacks are actual executable entities, nodes are just containers of callbacks and serve as an abstraction to allocate callbacks to executors, i.e., once a node is assigned to an executor, all callbacks of that node are executed by that executor.

A set of callbacks with data dependencies forms a \textit{processing chain}, or simply called a chain. Each callback can be associated with one or more chains, and chains can be formed by callbacks from different nodes. We will give a more detailed explanation of chains in Sec.~\ref{sec:system_model}.

An \textit{executor} is the ROS 2 abstraction of OS-level scheduling entities and it executes callbacks assigned to it. Each executor maintains a \textit{ReadySet}, which is a cached set of ``ready'' regular (non-timer) callbacks~\cite{Casini_ECRTS19,Tang_RTSS20}.\footnote{Timer callbacks are added to the ReadySet instantly after their release.} The ReadySet is updated only when it is empty (or there is no callback in the set eligible to execute\footnote{Ready callbacks in the ReadySet may not be eligible to run due to callback groups that enforce concurrency. We will explain more details later.}). This is the only point when the executor communicates with underlying layers, and is called a \textit{polling point} in the literature. The time interval between two consecutive polling points is called a \textit{processing window}, in which the executor processes regular callbacks in its ReadySet plus incoming timer callbacks. Therefore, callbacks released after one polling point are not processed until the next polling point, causing priority inversion. Also, the use of the ReadySet makes at most one instance of any callback be processed in one processing window. Hence, in addition to the priority inversion problem, a callback instance might be blocked for multiple processing windows before it gets scheduled if there are multiple pending instances of the same callback. Sec.~\ref{challenges} discusses more details on these issues in the context of multi-threaded executors.

In addition, ROS 2 executors provide \textit{callback groups} to control the concurrency of callback execution. There are two options: \textit{reentrant} and \textit{mutually-exclusive}. The reentrant callback group allows an executor to execute any ready callback with no other restriction. Hence, as long as a preceding callback in a chain has completed execution, the next callback becomes ready and can be considered for scheduling. On the other hand, the mutually-exclusive callback group limits any callback within this group not to be executed in parallel. In other words, the execution of callbacks within a mutually-exclusive group is all serialized even if the executor has multiple threads. This could be a useful option if callbacks were originally programmed for a single-threaded environment but assigned to a multi-threaded executor, or they access shared data with no synchronization in mind. However, the use of mutually-exclusive groups introduces another type of dependency in callback scheduling. Our analysis in Sec.~\ref{RTA} takes into account the effects of these two options.

\subsection{System Model}\label{sec:system_model}
We consider a ROS 2 system $\Gamma$ running on a multi-core platform. The system $\Gamma$ is composed of a set of independent chains, i.e.,
$\Gamma = \{\Gamma_\mathcal{C}, \Gamma_{\mathcal{C}'}, \Gamma_{\mathcal{C}''}, ...\}$. We present our model for callbacks, chains, and executors as below. Without loss of generality, we assume the discrete-time model in our system, in which a time interval is a non-negative integer multiplier of the system time unit (e.g., clock cycle).



    



\smallskip
\noindent\textbf{Callbacks.}
A callback $\tau_{ \langle \mathcal{C},j \rangle}$ is denoted as the $j^{th}$ callback of a chain $\Gamma_{\mathcal{C}}$, where $1 \leq j \leq \|\Gamma_\mathcal{C}\| $. We characterize each callback $\tau_{ \langle \mathcal{C},j \rangle}$ with:
\begin{itemize}
    \item \textbf{$E_{\langle \mathcal{C},j \rangle}$}: The worst-case execution time (WCET) of an instance of a callback $\tau_{ \langle \mathcal{C},j \rangle}$.
    


    \item \textbf{$\pi_{\langle \mathcal{C},j \rangle}$}: The priority of $\tau_{ \langle \mathcal{C},j \rangle}$ (smaller values mean lower priority).
\end{itemize}
Note that callbacks belonging to a chain do not have their own periods or deadlines since they follow their chain's timing constraints. If the system has a callback that does not belong to any chain but has a timing constraint, it can be modeled as a single-callback chain for analysis purposes.

\smallskip
\noindent\textbf{Chains.} A chain $\Gamma_\mathcal{C} = \{\tau_{\langle \mathcal{C},1 \rangle}, \tau_{ \langle \mathcal{C},2 \rangle}, ..., \tau_{ \langle \mathcal{C}, \|\Gamma_\mathcal{C}\| \rangle}\}$ consists of $\|\Gamma_\mathcal{C}\|$ callbacks with a sequential execution order, i.e., $\tau_{ \langle \mathcal{C},j+1 \rangle}$ can start only after $\tau_{ \langle \mathcal{C},j \rangle}$ finishes. 
The chain $\Gamma_\mathcal{C}$ can be either timer-triggered (i.e., the first callback $\tau_{\langle \mathcal{C},1 \rangle}$ is a timer callback) or event-triggered (i.e., $\tau_{\langle \mathcal{C},1 \rangle}$ is a regular callback released by a periodic sensor input or message from other sub-systems).
Due to data dependency, any subsequent callback becomes ready to run only when its predecessor finishes execution.
For presentation simplicity, the periods of all subsequent callbacks are denoted with the same period as their chain.
We characterize $\Gamma_\mathcal{C}$ as follows:

\centerline{$\Gamma_\mathcal{C} := \big ( E_\mathcal{C}, T_\mathcal{C}, D_\mathcal{C}, \pi_\mathcal{C}\big )$}

\begin{itemize}
    \item \textbf{$E_\mathcal{C}$}: The cumulative WCET of an instance of the chain $\Gamma_\mathcal{C}$, i.e., $E_\mathcal{C} = \sum_{j=1}^{\|\Gamma_\mathcal{C}\|} E_{ \langle \mathcal{C},j \rangle}$.    
    \item \textbf{$T_\mathcal{C}$}: The period of an instance of $\Gamma_\mathcal{C}$.
    \item \textbf{$D_\mathcal{C}$}: The relative deadline of an instance of $\Gamma_\mathcal{C}$.

    \item \textbf{$\pi_\mathcal{C}$}: The priority of $\Gamma_\mathcal{C}$ (smaller values mean lower priority). Following the criticality-as-priority assignment~\cite{de2009scheduling}, we assume chains with higher criticality have higher priority.
\end{itemize}
We denote the $i^{th}$ instance of a chain $\Gamma_\mathcal{C}$ and its callbacks as $\Gamma_\mathcal{C}^i = \{\tau_{ \langle \mathcal{C},1 \rangle}^i, \tau_{ \langle \mathcal{C},2 \rangle}^i, \tau_{ \langle \mathcal{C},3 \rangle}^i, ... , \tau_{ \langle \mathcal{C}, \|\Gamma_\mathcal{C}\| \rangle}^i\}$. 

 The response time of the chain $\Gamma_\mathcal{C}$ is denoted as $R_\mathcal{C}$, which means the time from when the first callback $\tau_{\langle \mathcal{C},1 \rangle}$ is released until the last callback $\tau_{ \langle \mathcal{C}, \|\Gamma_\mathcal{C}\| \rangle}$ finishes execution. The chain is said to be \textit{schedulable} if $R_\mathcal{C}\le D_\mathcal{C}$. By this model, a burst release of chain instances is possible depending on the period of the chain, i.e., the next chain instance can be released before the completion of the previous instance. In other words, the response time and the deadline of a chain, $R_\mathcal{C}$ and $D_\mathcal{C}$ respectively, can be greater than its period, $T_\mathcal{C}$. Later, we will first analyze the case for constrained deadlines ($\forall \mathcal{C}: D_\mathcal{C}\le T_\mathcal{C}$) and then extend it to arbitrary deadlines ($\exists \mathcal{C}: D_\mathcal{C} > T_\mathcal{C}$).

\smallskip
\noindent\textbf{Executors.}
We consider a multi-threaded executor $\Pi$ consisting of $m$ worker threads, i.e., $\Pi = \{r_1, r_2, ..., r_m\}$. Each thread $r_k$ runs on a different CPU core to maximize concurrency and has a resource reservation for guaranteed resource supply. Hence, each thread is characterized by $r_k = (C_k^r,T_k^r)$ where $1 \leq k \leq m$, meaning that $r_k$ provides $C_k^r$ units of CPU time every $T_k^r$ units.
The supply bound function of $r_k$, $sbf_k^*(\Delta)$, which lower-bounds the amount of resource supply during an interval $\Delta$, is given by \cite{shin08compos}:
\begin{equation}\label{eq:sbf}
\begin{split}
sbf_k^*(\Delta)\!=\!\left\{\!\!\!\!
\resizebox{.85\linewidth}{!}{
    $\begin{array}{ll}
    \Delta-(\kappa+1)(T_k^r-C_k^r) & \text{if } \Delta \in [(\kappa\!+\!1)T_k^r\!-\!2\!\cdot\!C_k^r, (\kappa\!+\!1)T_k^r\!-\!C_k^r]\\
    \\
    (\kappa-1) \cdot C_k^r & \text{otherwise}\\
    \end{array}$
}
\right.
\end{split}
\end{equation}
where $\kappa=\max\bigg (\big [\frac{\Delta-(T_k^r-C_k^r)}{T_k^r}\big ], 1 \bigg )$. This captures the longest initial delay that the periodic resource reservation can incur, i.e., $2T_k^r-2C_k^r$.

For ease of integration with schedulability analysis, a linear approximation of the supply bound function has been widely used \cite{shin08compos}:
\begin{equation}\label{eq:sbf2}
\begin{split}
sbf_k(\Delta)\!=\!\left\{\!\!\!\!
\resizebox{.6\linewidth}{!}{
    $\begin{array}{ll}
    \frac{C_k^r}{T_k^r}(\Delta-2(T_k^r-C_k^r)) & \text{if } \Delta \geq 2(T_k^r-C_k^r)\\
    \\
    0 & \text{otherwise}\\
    \end{array}$
}
\right.    
\end{split}
\end{equation}
Based on this, we can derive the following.

\begin{definition}[$sbf$] The supply bound function of a multi-threaded executor $\Pi$ is given by 
\begin{equation}\label{eq:sbf_executor}
sbf_\Pi(\Delta)= \sum_{r_k \in \Pi}sbf_k(\Delta)
\end{equation}
\end{definition}
Note that worker threads of an executor do not need to be released at the same time. As long as all threads have started before $t=0$, Eq.~\eqref{eq:sbf_executor} will hold.

To find the minimum time interval required to obtain a certain amount of resource supply $x$, we use the pseudo-inverse function of $sbf_k(\Delta)$.

\begin{definition}[$\overline{sbf}_k$~\cite{guan2014general}] The pseudo-inverse function of $sbf_k$ is defined as follows:
\begin{equation}\label{eq:sbf_inverse}
\overline{sbf}_k(x)= \min \{\Delta | sbf_k(\Delta) = x \}
\end{equation}
where $x$ is the amount of resource that is needed.
\end{definition}

\section{Challenges in Multi-threaded Executor RTA} \label{challenges}

As explained earlier, a ROS 2 executor has a ReadySet which is updated at a polling point (PP), and the time interval between two consecutive polling points is called a processing window (PW). The latest work~\cite{Tang_RTSS20} derived the following lemmas on PWs in a single-threaded executor to improve analysis accuracy over \cite{Casini_ECRTS19}.

\begin{itemize}
    \item (Lemma 1 in \cite{Tang_RTSS20}) ``At most one instance of a regular callback executes in a processing window.''
    
    \item (Lemma 2 in \cite{Tang_RTSS20}) ``Let $\Gamma_\mathcal{C}$ be an arbitrary chain. Suppose $\tau_{\langle \mathcal{C}, 1\rangle}^i$ (the first regular callback of the $i^{th}$ instance of the $\Gamma_\mathcal{C}$) executes in processing window $pw_n$,
    then the earliest processing window for $\tau_{\langle \mathcal{C}, 1\rangle}^l ~ (l > i)$ to execute is $pw_{n+l-i}$."    
    \item (Lemma 3 in \cite{Tang_RTSS20}) ``At most one regular callback instance of a chain instance executes in a processing window.''    
    \item (Lemma 4 in \cite{Tang_RTSS20}) ``The regular callback instances of a chain instance execute in consecutive processing windows one by one.''
\end{itemize}



The definitions of PP and PW are rather straightforward in a single-threaded executor as there is only one thread updating ReadySet. However, in a multi-threaded executor (implemented in the \texttt{rclcpp} package of ROS 2 Galactic and newer versions to date), the ReadySet is shared among all threads.
Hence, one or more threads might become idle (i.e., ReadySet is empty or has no eligible callback to execute) and update ReadySet, while other threads are still executing their callbacks from the previous version of ReadySet. Those who were executing callbacks when ReadySet was updated cannot even notice such an update.   
Therefore, we need to revise the definitions of PP and PW for multi-threaded executors:


\begin{itemize}
    \item \textbf{Polling Point (PP)}: A time point when \textit{at least one thread} becomes idle.
    \item \textbf{Processing Window (PW)}: The time interval between two consecutive PPs, \textit{regardless of which thread triggered the new PP}. In the $n^{th}$ PW, $pw_n$, there might be some threads that are still executing callbacks whose start times were in previous PWs, $pw_{n-p}$. These callbacks are considered {\it carry-in} callbacks for $pw_n$. Also, some callbacks that started  in $pw_n$ may continue to execute in $pw_{n+q}$. Such callbacks are considered as {\it carry-out} callbacks for $pw_{n}$.
\end{itemize}

Based on the above definitions, some of the lemmas derived in \cite{Tang_RTSS20} are invalid or conditionally valid for a multi-threaded executor. Below we fix them or confirm their validity.
\begin{itemize}
    \item (Lemma 1 in \cite{Tang_RTSS20}) This lemma stays valid in a multi-threaded executor {\it only if} all chains in the system $\Gamma$ have \textit{constrained deadlines}. When chains in $\Gamma$ have \textit{arbitrary deadlines}, then at most $m$ instances of each regular callback can be in execution in the same PW, where $m$ is the number of threads in a multi-threaded executor. This is because a new PP can be triggered by any idling thread; hence, there can be $m-1$ instances carried-in from previous PWs by $m-1$ threads and 1 new instance by 1 thread that triggered the current PW.
    
    \item (Lemma 2 in \cite{Tang_RTSS20}) Since this lemma  directly follows the lemma 1 in \cite{Tang_RTSS20}, similarly, it stays valid {\it only if} all chains in $\Gamma$ have \textit{constrained deadlines}. However, for chains with \textit{arbitrary deadlines}, the earliest PW for $\tau_{\langle \mathcal{C}, 1\rangle}^l ~ (l > i)$ to execute would be 
    $pw_{n+\lfloor\frac{l-i}{m} \rfloor}$.
    
    \item (Lemma 3 in \cite{Tang_RTSS20}) This lemma remains valid with the same proof as in \cite{Tang_RTSS20}.
    
    \item (Lemma 4 in \cite{Tang_RTSS20}) This lemma is {\it not} valid anymore because the new definition of PP does not guarantee that all callbacks finish their execution when the new PW begins (i.e., the execution of some callbacks may span over multiple PWs). Hence, there is no guarantee on which PW the succeeding callback instance can get into ReadySet.
\end{itemize}

These revisions may help develop a timing analysis for a multi-threaded executor, by following the approach of \cite{Tang_RTSS20} that focuses on PW analysis. However, there are still many difficulties to be solved.
At first, analyzing when a new PP happens and how long a PW takes is not as straightforward as in a single-threaded executor. Moreover, it would be hard to determine how many PWs each callback would take to complete its execution and how many PWs exist between two consecutive callbacks. The difficulty multiplies when chains with arbitrary deadlines are considered. 
Instead, in this paper, we take a different approach: analyzing the response time of a chain without having PW in mind. Our proposed analysis is built by extending the conventional non-preemptive global task scheduling analysis~\cite{lee14improvement} and taking into account semantic differences introduced by chains, callback dependencies, and the ReadySet management. This approach also allows us to incorporate priority-driven scheduling enhancements that yield a significant benefit to critical chains.   

\section{Proposed RTA Framework}\label{RTA}

This section presents our proposed response-time analysis (RTA) framework. We first review the conventional non-preemptive fixed-priority (NP-FP) global task scheduling analysis developed for non-ROS systems (Sec.~\ref{sec:NPFP}). Then, for a single ROS 2 multi-threaded executor with {\em reentrant} callback groups, we analyze the response time of a chain with a constrained deadline (Sec.~\ref{sec:PWA_CD}). Based on this, we present priority-driven scheduling enhancements and the corresponding analysis (Sec.~\ref{sec:PPWA_CD}), and extend these to chains with arbitrary deadlines (Sec.~\ref{sec:PWA_AD}). Finally, we relax our assumptions by incorporating the effects of {\em mutually-exclusive} callback groups into our analysis (Sec.~\ref{sec:callback-groups}) and discussing how to analyze the end-to-end response time of a chain that spans across multiple executors regardless of their types (Sec.~\ref{sec:end-to-end}).

\subsection{Review of NP-FP Task Scheduling}\label{sec:NPFP}

According to the NP-FP global task scheduling analysis~\cite{lee17improved}, to obtain the worst-case response time of a given non-preemptive task $\tau_i$, we need to find the latest time that a $\tau_i$'s job starts its first unit of execution. Assuming this first unit finishes after a time interval $\Delta$ from the release of the job, the response time of $\tau_i$ is $\Delta +$ $\overline{sbf}_k(E_i -1$) where $E_i$ is the WCET of $\tau_i$ and $\overline{sbf}_k(E_i -1)$ is the minimal time interval required by any processor (worker thread $r_k$) to execute $E_i -1$. To find $\Delta$, there should be enough resource supply for the system's demand. 
For at least one unit of workload of $\tau_i$ to execute in $\Delta$, the following inequality has to hold: 
\begin{equation} \label{eq:dbf_sbf}
dbf(\Delta) < sbf_\Pi(\Delta)    
\end{equation}
where $dbf(\Delta)$ and $sbf_\Pi(\Delta)$ stand for the demand-bound function and the supply-bound function, respectively. Note that $sbf_\Pi(\Delta)=m\cdot\Delta$ is the maximum resource supply in a $m$-processor system. 

    

To determine $dbf(\Delta)$, we first need to calculate the workload of other interfering tasks in $\Delta$. Consider a set of non-preemptive tasks with constrained deadlines. The workload of a task $\tau_j$ with the WCET of $E_j$, the period of $T_j$, and the deadline of $D_j$ (i.e., $\tau_j := \langle E_j, T_j, D_j\rangle$) during an arbitrary time interval $\Delta$ can be upper-bounded by the following, as proposed in \cite{bert08}:
\begin{lemma}[Workload~\cite{bert08}]\label{lm:workload}
In an arbitrary time interval $\Delta$, the jobs of $\tau_j$ with a constrained deadline can execute up to 
\begin{equation}\label{eq:workload}
W_j(\Delta, \alpha)=
\lfloor \frac{\Delta+\alpha}{T_j}\rfloor \cdot E_j \hspace{1mm} +
\min \big(E_j, \Delta + \alpha - \lfloor \frac{\Delta+\alpha}{T_j}\rfloor \cdot T_j \big)   
\end{equation}
where $\alpha$ is an extra time to capture carry-in jobs in $\Delta$, i.e., $\alpha = R_j-E_j$ if the response time $R_j$ is known, and $\alpha = D_j-E_j$ otherwise. \footnote{Compared to the conference version of this paper that appeared at RTAS’23, Lemma 1 has been revised to correctly employ the workload function given in \cite{bert08}. We repeated the experiments and reported them in Sec.~\ref{EVAL}. We did not observe significant differences as a result of this correction.}

\end{lemma}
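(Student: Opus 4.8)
The plan is to establish the bound by constructing the arrival-and-execution pattern of $\tau_j$ that maximizes its cumulative execution inside a window of length $\Delta$, and then counting the resulting workload. First I would fix an arbitrary window $[t_0, t_0+\Delta)$ and observe that, to maximize the workload $\tau_j$ contributes to it, two conditions must hold simultaneously: (i) the jobs of $\tau_j$ are released as densely as the model permits, i.e., exactly one release every $T_j$; and (ii) each job's $E_j$ units of execution are placed as late as allowed relative to its release, since only execution (not release) counts toward the workload, and pushing execution later lets an earlier-released job reach into the window.

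The crux is the treatment of the carry-in job, i.e., the job released before $t_0$ whose execution still overlaps the window. Since a job released at time $r$ completes no later than $r+R_j$, and since in the non-preemptive setting its $E_j$ units execute contiguously, the worst case places that execution in the final $E_j$ units $[r+R_j-E_j, r+R_j)$ of its response-time interval. I would then introduce the change of variable $r' = r + (R_j - E_j)$, so that each job behaves, for workload-counting purposes, as though it were released at $r'$ and executed immediately for $E_j$. Under this shift the effective releases remain spaced by $T_j$, and the carry-in contribution is captured exactly by extending the observation window backward by $\alpha = R_j - E_j$: the workload of $\tau_j$ in $[t_0, t_0+\Delta)$ equals the execution of a periodic, immediately-executing job sequence over the extended window of length $\Delta + \alpha$.

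It then remains to count. In a window of length $\Delta + \alpha$ filled with back-to-back $E_j$-chunks spaced $T_j$ apart, exactly $\lfloor (\Delta+\alpha)/T_j \rfloor$ jobs fit entirely, each contributing $E_j$, and the leftover fragment of length $(\Delta+\alpha) - \lfloor (\Delta+\alpha)/T_j \rfloor T_j$ admits at most one further job contributing $\min(E_j, (\Delta+\alpha) - \lfloor (\Delta+\alpha)/T_j \rfloor T_j)$, which yields $W_j(\Delta,\alpha)$. Finally I would settle the value of $\alpha$: when the response time $R_j$ is known we use $\alpha = R_j - E_j$ directly; otherwise, because $\tau_j$ has a constrained deadline and is assumed schedulable, $R_j \le D_j$, so substituting $D_j$ for $R_j$ gives the safe (if looser) extension $\alpha = D_j - E_j$.

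I expect the main obstacle to be rigorously justifying that the constructed pattern is genuinely worst-case, in particular arguing that maximally delaying each job's execution and packing releases periodically are simultaneously optimal, and that the shift $r' = r + (R_j - E_j)$ never over-counts: a single job contributes at most $E_j$, a fact the $\min$ term must enforce on the boundary fragment. Confirming that the overlap of the shifted carry-in with the window start is fully absorbed by the backward extension, with no double counting against the first full job, is the delicate part; once that alignment is pinned down, the counting step is routine.
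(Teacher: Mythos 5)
The paper does not prove this lemma at all --- it is imported verbatim from the cited reference \cite{bert08} as a known result --- so there is no in-paper proof to compare against. Your sketch is the standard carry-in/densest-packing derivation underlying that result (release the carry-in job $\alpha=R_j-E_j$ before the window so its execution lands entirely at the window's start, release subsequent jobs every $T_j$, and count full jobs plus a truncated carry-out), and it is sound; the only minor wrinkles are that (i) ``execute as late as possible'' is the right worst case only for the carry-in job, while the carry-out job should execute as early as possible after release --- your shift $r'=r+\alpha$ combined with immediate execution reconciles the two correctly --- and (ii) the contiguous, non-preemptive execution you invoke is not needed, since the bound holds for preemptive execution as well, which is the setting of \cite{bert08}.
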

Then blocking from lower-priority tasks needs to be considered in $dbf(\Delta)$.
In \cite{lee14improvement}, a NP-FP global scheduling analysis is provided for periodic non-preemptive tasks, which limits the number of carry-in jobs from lower-priority tasks that can block $\tau_i$ in $\Delta$ by the number of processors, $m$. 

\begin{lemma}[LeSh~\cite{lee14improvement} in \cite{lee17improved}'s presentation] \label{lm:LeSh}
The response time of a periodic non-preemptive task $\tau_i = \langle E_i, T_i\rangle$ is upper-bounded by $R_i = \Delta +$ $\overline{sbf}_k(E_i -1)$, if $dbf(\Delta) < sbf_\Pi(\Delta)$ holds for the following $dbf(\Delta)$:
\begin{equation}\label{eq:LeSh} \begin{split}
dbf(\Delta) &= \sum_{\tau_h \in hp(\tau_i)}W_h(\Delta, R_h-E_h) \\
&+ \sum_{\tau_l \in mlp(\tau_i)} \min(E_l-1, \Delta)
\end{split}
\end{equation}
where $hp(\tau_i)$ is the set of higher-priority tasks than $\tau_i$, and $mlp(\tau_i)$ is the subset of lower-priority tasks than $\tau_i$ that give the $m$ largest $\min(E_l-1, \Delta)$ values (i.e., $|mlp(\tau_i)|=m$).
\end{lemma}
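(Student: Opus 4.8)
The plan is to bound $R_i$ by separating two contributions: (i) the latest instant $\Delta$, measured from the release of $\tau_i$'s job, at which its \emph{first} unit of execution can begin; and (ii) the time to finish the remaining $E_i-1$ units once that first unit has started. For part (ii), non-preemptivity is decisive: once the job occupies a thread $r_k$ it runs to completion without interruption, so by the pseudo-inverse of Eq.~\eqref{eq:sbf_inverse} the remaining work needs exactly $\overline{sbf}_k(E_i-1)$ additional time. This yields $R_i=\Delta+\overline{sbf}_k(E_i-1)$, and the whole argument reduces to showing that the condition $dbf(\Delta)<sbf_\Pi(\Delta)$ forces the first unit to start no later than $\Delta$.

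First I would establish this latest-start characterization by contraposition. Assume the first unit of $\tau_i$ has \emph{not} begun by time $\Delta$ after its release. Because the global scheduler is work-conserving, no thread can sit idle while $\tau_i$ is ready and unscheduled; hence, over the entire supplied portion of the interval, every thread is occupied by work scheduled ahead of $\tau_i$'s first unit. I would then partition this interfering work by priority: execution from higher-priority tasks $hp(\tau_i)$, and blocking from lower-priority tasks that were already running when $\tau_i$ became ready and, by non-preemptivity, cannot be displaced.

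For the higher-priority contribution I would bound the total execution that tasks in $hp(\tau_i)$ can inject into the interval using the workload function of Lemma~\ref{lm:workload}, instantiated with the carry-in allowance $\alpha=R_h-E_h$ (the response time is known for higher-priority tasks, giving the tight carry-in bound). For the lower-priority contribution, the key structural fact is that on $m$ threads at most $m$ lower-priority jobs can be simultaneously in progress, and each such non-preemptive job can block $\tau_i$ for at most $E_l-1$ time units before $\tau_i$ claims a freed thread---the ``$-1$'' matching the single unit of $\tau_i$ that we isolate. Summing over only the $m$ lower-priority tasks that maximize $\min(E_l-1,\Delta)$ safely upper-bounds this blocking, giving the second term of \eqref{eq:LeSh}. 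Adding the two contributions defines $dbf(\Delta)$. Since all threads are busy throughout, all supply provided during the interval is consumed by interfering work; as that supply is at least $sbf_\Pi(\Delta)$ (Eq.~\eqref{eq:sbf_executor}) and the interfering work is at most $dbf(\Delta)$, we obtain $dbf(\Delta)\ge sbf_\Pi(\Delta)$. The contrapositive is precisely the stated condition \eqref{eq:dbf_sbf}.

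I expect the delicate step to be the lower-priority blocking bound: justifying that at most $m$ jobs (not more) can block, that each contributes at most $E_l-1$, and that taking the $m$ largest $\min(E_l-1,\Delta)$ values is both safe and consistent with isolating $\tau_i$'s first unit. This rests on combining non-preemptivity with the work-conserving and global nature of the scheduler, together with clean bookkeeping of the single reserved unit of $\tau_i$ so that the ``$-1$'' in \eqref{eq:LeSh} and the ``$-1$'' in $\overline{sbf}_k(E_i-1)$ do not double-count. By contrast, the higher-priority workload bound and the supply side follow directly from Lemma~\ref{lm:workload} and Eq.~\eqref{eq:sbf_executor}.
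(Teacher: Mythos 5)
Your overall route is the same as the paper's: the paper presents this lemma as a cited result (LeSh) and only sketches the argument—first-unit start time $\Delta$ plus $\overline{sbf}_k(E_i-1)$ for the non-preemptive remainder, with $dbf(\Delta)$ split into higher-priority workload and lower-priority blocking, closed by the supply/demand contrapositive—and your reconstruction follows exactly that decomposition. One point of your justification is off, though: you attribute the ``$-1$'' in $\min(E_l-1,\Delta)$ to ``the single unit of $\tau_i$ that we isolate,'' but it has nothing to do with $\tau_i$'s reserved unit. It arises because a lower-priority job can only cause non-preemptive blocking if it began executing strictly before $\tau_i$'s release, hence has completed at least one time unit by then and can contribute at most $E_l-1$ units of execution inside $[t,t+\Delta)$; this is how the paper itself explains the analogous term in its later theorems, and it also dissolves your worry about double-counting the two ``$-1$''s, since they account for different jobs. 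With that correction the argument is sound and matches the paper's intent.
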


The first term of $dbf(\Delta)$ in Eq.~\eqref{eq:LeSh} upper-bounds the amount of interference from higher-priority tasks until $\tau_i$ begins execution, and the second term captures the blocking time from lower-priority jobs that have started execution before $\Delta$. If $dbf(\Delta)<sbf_\Pi(\Delta)$ holds, at least one time unit is available in $[t, t+\Delta)$ for the job of a non-preemptive task $\tau_i$ released at $t$ to start its execution; hence, the response time is bounded by $R_i=\Delta+$ $\overline{sbf}_k(E_i-1)$. This equation can be solved by a fixed-point iteration, with $\Delta=1$ as a start condition.

Note that Eq.~\eqref{eq:LeSh} uses $R_h - E_h$ for $\alpha$ of the workload function given in Eq.~\eqref{eq:workload}. To avoid the need to compute the response time of high-priority tasks in advance, $R_h - E_h$ can be replaced with $D_h - E_h$, where $D_h$ is the deadline of $\tau_h$, if $\tau_h$ is schedulable ($\because D_h \ge R_h$).

\subsection{Response Time of Chains}\label{sec:PWA_CD}

Since the ROS 2 multi-threaded executor follows global work-conserving non-preemptive scheduling,
we can derive a response-time analysis for a chain $\Gamma_\mathcal{C}$ with a constrained deadline in a similar form as in Lemma~\ref{lm:LeSh}, by finding the longest time interval $\Delta$ such that the last callback of $\Gamma_\mathcal{C}$ ($\tau_{ \langle \mathcal{C}, \|\Gamma_\mathcal{C}\| \rangle}$) has at least one unit of execution during $[t, t+\Delta)$ where $t$ is the release time of $\Gamma_\mathcal{C}$. Then, the response time of the chain could be bounded by $\Delta+$ $\overline{sbf}_k(E_{\langle \mathcal{C}, \|\Gamma_\mathcal{C}\| \rangle}-1)$, analogous to the case for NP-FP task scheduling.

However, there are several differences to consider. The first issue is an additional blocking caused by the precedence dependencies between callbacks of the chain under analysis. 
This happens because the next callback cannot start execution until its previous callback is completed, even if there exist other idle threads in the executor. While these idle threads can be utilized by callbacks from other chains at runtime, for analysis purposes, such idle threads can be assumed to be occupied by an artificial workload that needs to be added to $dbf(\Delta)$.  

\begin{figure}[t]
\centerline{\includegraphics[width=1\linewidth]{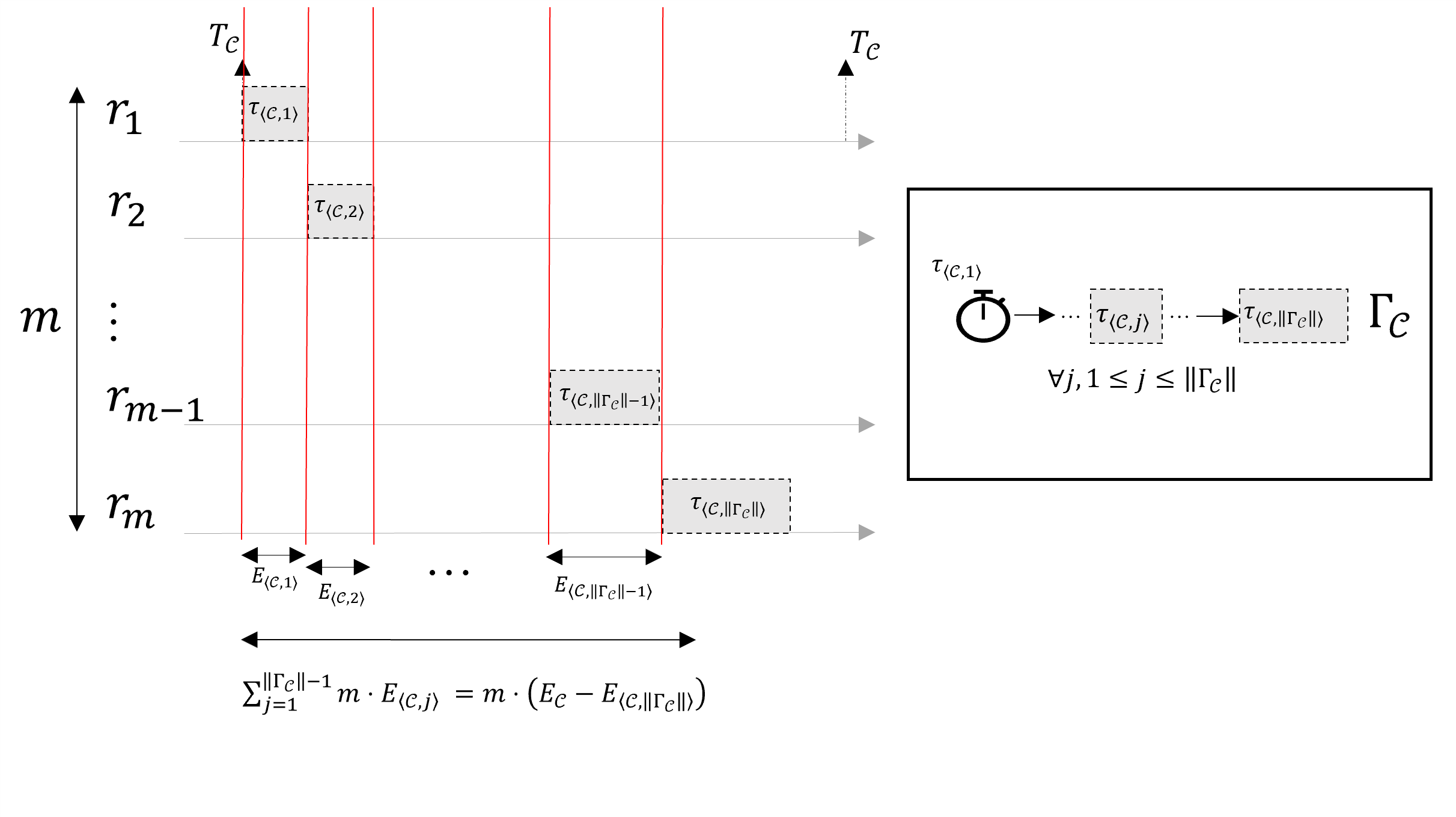}}\vspace{-10pt}
\caption{Proof of Lemma~\ref{lm:block}}
\label{fig2}
\end{figure}

\begin{lemma}\label{lm:block}
Consider two adjacent callbacks of a chain $\Gamma_\mathcal{C}$, $\tau_{\langle \mathcal{C},j \rangle}$ and $\tau_{\langle \mathcal{C},j +1 \rangle}$, on a $m$-threaded executor. The precedence-dependency blocking caused by $\tau_{\langle \mathcal{C},j \rangle}$ introduces an additional workload as interference to the start of $\tau_{\langle \mathcal{C},j+1 \rangle}$, which is upper-bounded by
\begin{equation} \label{eq:block}
    B_{\langle \mathcal{C}, j+1 \rangle} = m \cdot E_{\langle \mathcal{C}, j \rangle}
\end{equation}
%
%
\end{lemma}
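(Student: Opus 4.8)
The plan is to translate the precedence constraint between $\tau_{\langle \mathcal{C},j \rangle}$ and $\tau_{\langle \mathcal{C},j+1 \rangle}$ into an equivalent amount of demand charged in $dbf(\Delta)$, so that the LeSh-style test $dbf(\Delta) < sbf_\Pi(\Delta)$ of Lemma~\ref{lm:LeSh} continues to correctly locate the earliest instant at which $\tau_{\langle \mathcal{C},j+1 \rangle}$ may begin. The crucial observation driving the whole argument is that the executor's supply bound $sbf_\Pi$ aggregates the supply of all $m$ worker threads (Eq.~\eqref{eq:sbf_executor}), whereas the chain can advance to $\tau_{\langle \mathcal{C},j+1 \rangle}$ only after $\tau_{\langle \mathcal{C},j \rangle}$ has completed on a \emph{single} thread. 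Any aggregate supply that $sbf_\Pi$ credits while $\tau_{\langle \mathcal{C},j \rangle}$ is still running is therefore useless for starting $\tau_{\langle \mathcal{C},j+1 \rangle}$, and must be reflected as extra workload on the demand side.

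First I would isolate the worst case, namely that the delay imposed on $\tau_{\langle \mathcal{C},j+1 \rangle}$ is maximized when the predecessor executes for its full WCET $E_{\langle \mathcal{C},j \rangle}$. Next I would bound the supply that $sbf_\Pi$ can accrue over this interval: in the discrete-time model each of the $m$ threads delivers at most one unit of CPU time per time unit, so over an interval of length $E_{\langle \mathcal{C},j \rangle}$ the aggregate supply is at most $m \cdot E_{\langle \mathcal{C},j \rangle}$. Then I would argue that none of it advances $\tau_{\langle \mathcal{C},j+1 \rangle}$: the single thread running the predecessor consumes $E_{\langle \mathcal{C},j \rangle}$ units, while the remaining $m-1$ threads contribute up to $(m-1)\cdot E_{\langle \mathcal{C},j \rangle}$ units that, although idle from the chain's standpoint, are still counted by $sbf_\Pi$ and so (as noted in the text preceding the lemma) must be treated as occupied by an artificial workload. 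Summing the two contributions gives $E_{\langle \mathcal{C},j \rangle} + (m-1)\cdot E_{\langle \mathcal{C},j \rangle} = m \cdot E_{\langle \mathcal{C},j \rangle}$, the claimed bound $B_{\langle \mathcal{C},j+1 \rangle}$, and matches the timeline in Fig.~\ref{fig2}.

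The main obstacle I anticipate is justifying that charging the full parallel supply $m \cdot E_{\langle \mathcal{C},j \rangle}$ is \emph{safe} (yields a valid upper bound) rather than a smaller quantity such as $(m-1)\cdot E_{\langle \mathcal{C},j \rangle}$. The resolution rests on the fact that the interference in Eq.~\eqref{eq:LeSh} is measured against the aggregate supply $sbf_\Pi$, which effectively divides wall-clock delay by $m$; hence the workload must be inflated by a factor of $m$ for the test to reproduce the correct $\approx E_{\langle \mathcal{C},j \rangle}$ wall-clock delay to the successor's start, and any undercount would break soundness. I would also confirm the boundary behavior: when the actual predecessor execution is shorter than $E_{\langle \mathcal{C},j \rangle}$ the accrued supply is even smaller, so $m \cdot E_{\langle \mathcal{C},j \rangle}$ remains an overestimate and stays safe, and this additive term composes correctly with the higher-priority interference and lower-priority blocking terms already present in Lemma~\ref{lm:LeSh}.
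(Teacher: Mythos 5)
Your proof is correct and takes essentially the same approach as the paper's: the predecessor $\tau_{\langle \mathcal{C},j \rangle}$ is treated, from the successor's viewpoint, as if it occupied all $m$ threads for its entire execution, yielding the artificial workload $m \cdot E_{\langle \mathcal{C},j \rangle}$ to be added to $dbf(\Delta)$. Your explicit decomposition into $E_{\langle \mathcal{C},j \rangle}$ units of real work plus $(m-1)\cdot E_{\langle \mathcal{C},j \rangle}$ units of forced idleness, together with the justification that the full factor $m$ is needed for soundness against the aggregate supply $sbf_\Pi(\Delta)=m\cdot\Delta$, is simply a more careful rendering of the same argument.
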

\begin{proof}
    Suppose $\Gamma_\mathcal{C}$ is the only chain in the system and we are interested in the latest time that $\tau_{ \langle \mathcal{C},2 \rangle}$ can start its execution from the release time of $\Gamma_\mathcal{C}$. Since $\tau_{ \langle \mathcal{C},2 \rangle}$ cannot start until $\tau_{\langle \mathcal{C},1 \rangle}$ completes, from $\tau_{\langle \mathcal{C},2 \rangle}$'s view, 
    $\tau_{\langle \mathcal{C},1 \rangle}$ behaves as if it occupied all $m$ threads. This artificial workload equals to $m \cdot E_{\langle \mathcal{C},1 \rangle}$, which needs to be added to $dbf$ of $\tau_{\langle \mathcal{C},2 \rangle}$. By induction, as shown in Fig.~\ref{fig2}, the same happens for any callback $\tau_{\langle \mathcal{C},j \rangle}$ regardless of on which thread it is running. 
\end{proof}
Based on this, we can derive the following.
\begin{lemma}\label{lm:chain_block}
For the last callback of a chain $\Gamma_\mathcal{C}$, 
the precedence-dependency blocking caused by all of its preceding callbacks is given by $m \cdot (E_\mathcal{C}- E_{\langle \mathcal{C}, \|\Gamma_\mathcal{C}\| \rangle})$.
\end{lemma}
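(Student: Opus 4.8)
The plan is to build directly on Lemma~\ref{lm:block} and exploit the fact that the precedence constraints of a chain compose transitively. For the last callback $\tau_{\langle \mathcal{C}, \|\Gamma_\mathcal{C}\| \rangle}$ to begin execution, every one of its predecessors $\tau_{\langle \mathcal{C}, 1 \rangle}, \ldots, \tau_{\langle \mathcal{C}, \|\Gamma_\mathcal{C}\|-1 \rangle}$ must first run to completion, one after another, since $\tau_{\langle \mathcal{C}, j+1 \rangle}$ cannot start until $\tau_{\langle \mathcal{C}, j \rangle}$ finishes. So I would first argue that the worst-case artificial blocking seen by the last callback is simply the accumulation of the per-stage blocking quantities that Lemma~\ref{lm:block} already bounds, one for each adjacent pair along the chain.

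Concretely, I would apply Lemma~\ref{lm:block} to each pair $(\tau_{\langle \mathcal{C}, j \rangle}, \tau_{\langle \mathcal{C}, j+1 \rangle})$ for $1 \le j \le \|\Gamma_\mathcal{C}\| - 1$, which contributes an artificial workload of $m \cdot E_{\langle \mathcal{C}, j \rangle}$ toward the start of $\tau_{\langle \mathcal{C}, j+1 \rangle}$. Summing these contributions over all preceding callbacks and factoring out $m$ gives $m \cdot \sum_{j=1}^{\|\Gamma_\mathcal{C}\|-1} E_{\langle \mathcal{C}, j \rangle}$. Finally, I would invoke the definition $E_\mathcal{C} = \sum_{j=1}^{\|\Gamma_\mathcal{C}\|} E_{\langle \mathcal{C}, j \rangle}$ to rewrite the partial sum as $E_\mathcal{C} - E_{\langle \mathcal{C}, \|\Gamma_\mathcal{C}\| \rangle}$, yielding the claimed bound $m \cdot (E_\mathcal{C} - E_{\langle \mathcal{C}, \|\Gamma_\mathcal{C}\| \rangle})$.

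The step I expect to be the crux is justifying that these per-stage blocking terms are genuinely additive and not double-counted. The argument rests on the sequential semantics of a chain: because the predecessors execute strictly one after another, the intervals during which each predecessor ``occupies all $m$ threads'' (in the artificial-workload sense of Lemma~\ref{lm:block}) are disjoint, so their worst-case contributions stack rather than overlap. I would make this explicit through an induction on the stage index, using Lemma~\ref{lm:block} as the inductive step: assuming the cumulative precedence-dependency blocking up to the start of $\tau_{\langle \mathcal{C}, j \rangle}$ equals $m \cdot \sum_{i=1}^{j-1} E_{\langle \mathcal{C}, i \rangle}$, the additional blocking injected before $\tau_{\langle \mathcal{C}, j+1 \rangle}$ can start is exactly $m \cdot E_{\langle \mathcal{C}, j \rangle}$, which preserves the invariant and closes the induction at $j+1 = \|\Gamma_\mathcal{C}\|$.
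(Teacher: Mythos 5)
Your proposal is correct and follows essentially the same route as the paper: apply Lemma~\ref{lm:block} to each adjacent pair, sum the per-stage terms $m \cdot E_{\langle \mathcal{C}, j \rangle}$ over $j = 1, \ldots, \|\Gamma_\mathcal{C}\|-1$, and rewrite the sum via the definition of $E_\mathcal{C}$. Your added justification for why the terms are additive (disjoint sequential intervals, closed by induction) is a sound elaboration of what the paper leaves implicit.
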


\begin{proof}
   By Lemma~\ref{lm:block}, $\sum_{j=1}^{\|\Gamma_\mathcal{C}\|-1} m\cdot E_{ \langle \mathcal{C},j \rangle} = m \cdot (E_\mathcal{C}- E_{\langle \mathcal{C}, \|\Gamma_\mathcal{C}\| \rangle})$.
\end{proof}

The next issue is due to the ReadySet management. 
Since ReadySet is a cached set of ready callbacks and is updated only at PPs, 
the callbacks of the chain $\Gamma_\mathcal{C}$ under analysis can be blocked multiple times by lower-priority callbacks of other chains during $\Gamma_\mathcal{C}$'s execution,
and only the execution of the last callback of $\Gamma_\mathcal{C}$ is not interfered due to the nature of non-preemptive scheduling~\cite{Casini_ECRTS19,Tang_RTSS20}. 
Therefore, we need to treat the callback instances of all other chains (regardless of their priorities) as {\it interfering} callbacks, as if they had higher priority than any callback $\tau_{\langle \mathcal{C}, j \rangle}\in \Gamma_\mathcal{C}$. 
For each interfering callback, the task-level workload function given in Eq.~\eqref{eq:workload} can be used directly since a periodic non-preemptive task $\tau_j$ in Eq.~\eqref{eq:workload} is equivalent to a callback in our model. The entire workload of an interfering chain $\Gamma_\mathcal{C'}$ can be upper-bounded as follows:

\begin{lemma}\label{lm:chain_workload}
In an arbitrary time interval $\Delta$, the instances of a chain $\Gamma_\mathcal{C'}$ with a constrained deadline can execute up to 
\begin{equation}\label{eq:chain_workload}
W_\mathcal{C'}(\Delta, \alpha)= \lfloor \frac{\Delta+\alpha}{T_\mathcal{C'}}\rfloor \cdot E_\mathcal{C'} \hspace{1mm} +
\min \big(E_\mathcal{C'}, \Delta + \alpha - \lfloor \frac{\Delta+\alpha}{T_\mathcal{C'}}\rfloor \cdot T_\mathcal{C'}\big)     
\end{equation}
where $\alpha$ is an extra time to capture carry-in instances of $\Gamma_\mathcal{C'}$.
\end{lemma}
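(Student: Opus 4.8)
The plan is to reduce this statement to the task-level workload bound of Lemma~\ref{lm:workload} by showing that, as far as cumulative execution is concerned, the whole chain $\Gamma_\mathcal{C'}$ behaves exactly like a single non-preemptive periodic task with worst-case execution time $E_\mathcal{C'}=\sum_j E_{\langle \mathcal{C'},j\rangle}$ and period $T_\mathcal{C'}$. First I would observe that, regardless of how the callbacks of a given chain instance are interleaved with other work and how far apart in time they run, that instance contributes a total of exactly $E_\mathcal{C'}$ units of execution, because the callbacks run sequentially and their WCETs sum to $E_\mathcal{C'}$. Hence the workload of $\Gamma_\mathcal{C'}$ inside a window $[t,t+\Delta)$ is the sum, over the chain instances whose execution overlaps the window, of the portion of each instance's $E_\mathcal{C'}$ that lands inside $[t,t+\Delta)$.

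The second step exploits the constrained-deadline hypothesis. Since $D_\mathcal{C'}\le T_\mathcal{C'}$ and the interfering chain is schedulable, an instance released at $r$ completes no later than $r+R_\mathcal{C'}\le r+D_\mathcal{C'}\le r+T_\mathcal{C'}$, i.e., before the next instance is released. Consequently the execution spans of consecutive instances never overlap, so at any instant at most one instance of $\Gamma_\mathcal{C'}$ is in progress. This is precisely the property enjoyed by a single periodic task, and it lets me count the in-window contributions exactly as in the derivation of Eq.~\eqref{eq:workload}: at most one \emph{carry-in} instance released before $t$ can spill into the window, contributing at most $E_\mathcal{C'}$, while the remaining contributions come from instances released within the extended window, of which there are $\lfloor (\Delta+\alpha)/T_\mathcal{C'}\rfloor$ full ones plus one partial one. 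I would let $\alpha$ play the same role as in Lemma~\ref{lm:workload}, shifting the carry-in instance as late as possible, so that instantiating Eq.~\eqref{eq:workload} with $E_\mathcal{C'}$ and $T_\mathcal{C'}$ directly yields $W_\mathcal{C'}(\Delta,\alpha)$ in the form of Eq.~\eqref{eq:chain_workload}, with no new computation required.

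The step I expect to be the main obstacle is arguing rigorously that the \emph{spread-out} execution of a chain instance cannot generate more in-window workload than the compact single task assumed by Lemma~\ref{lm:workload}: spreading the callbacks apart lets more distinct instances touch the window, but it simultaneously shrinks how much of each instance lies inside it, and I must show these two effects never combine to exceed $W_\mathcal{C'}(\Delta,\alpha)$. The non-overlap property from the constrained deadline is exactly what closes this gap, since it forces the arrival-and-execution pattern of $\Gamma_\mathcal{C'}$ to coincide with that of a single periodic task, capping at one carry-in instance and $\lfloor (\Delta+\alpha)/T_\mathcal{C'}\rfloor$ fully released instances within the extended window. In carrying this out I would also verify the partial-overlap term carefully, so that the carry-in instance and the last released instance are accounted for without double counting.
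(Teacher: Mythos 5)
Your proposal is correct and takes essentially the same route as the paper: the paper's (much terser) proof likewise treats the chain as a single periodic entity with cumulative WCET $E_{\mathcal{C}'}$ and period $T_{\mathcal{C}'}$ and reuses the counting of Lemma~\ref{lm:workload} ($\lfloor(\Delta+\alpha)/T_{\mathcal{C}'}\rfloor$ full contributions including the carry-in, plus a carry-out bounded by the $\min$ term), while you additionally spell out why the non-contiguous execution of a chain instance, confined to its release-to-deadline window and totalling $E_{\mathcal{C}'}$, satisfies the hypotheses of that counting. One nit: a literal instantiation of Eq.~\eqref{eq:workload} gives $\min\big(E_{\mathcal{C}'},\,\Delta+\alpha-\lfloor(\Delta+\alpha)/T_{\mathcal{C}'}\rfloor T_{\mathcal{C}'}\big)$ for the carry-out term, whereas Eq.~\eqref{eq:chain_workload} omits the $+\alpha$ inside the $\min$; since $\alpha\ge 0$ your instantiated bound dominates the paper's, so your argument remains sound (if marginally looser) despite not matching the stated formula exactly.
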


\begin{proof}
Similar to Eq.~\eqref{eq:workload}, in the first term, the total number of instances of $\Gamma_\mathcal{C'}$ (including a carry-in) that contributes to the workload with its entire WCET ($E_\mathcal{C'}$) is obtained by $\lfloor \frac{\Delta+\alpha}{T_\mathcal{C'}}\rfloor$ and multiplied by $E_\mathcal{C'}$. The workload of the carry-out instance is bounded by the second term.    
\end{proof}

Based on Lemmas~\ref{lm:block} and \ref{lm:chain_workload}, we can obtain the following theorem to find the response time of a chain $\Gamma_\mathcal{C}$.


\begin{theorem}\label{tm:PWA_CD}
The response time of a chain $\Gamma_\mathcal{C} = \{\tau_{\langle \mathcal{C},1 \rangle}, \tau_{ \langle \mathcal{C},2 \rangle}, ..., \tau_{ \langle \mathcal{C}, \|\Gamma_\mathcal{C}\| \rangle}\}$ with a \textbf{constrained deadline} on a \textbf{standard} ROS 2 multi-threaded executor with $m$ threads is upper bounded by $R_\mathcal{C}=\Delta+$ $\overline{sbf}_k(E_{\langle \mathcal{C}, \|\Gamma_\mathcal{C}\| \rangle}-1)$, if $dbf(\Delta)<sbf_\Pi(\Delta)$ holds for the following $dbf(\Delta)$:
\begin{equation}\begin{split}\label{eq:PWA_CD}
dbf(\Delta) = m \cdot (E_\mathcal{C}- E_{\langle \mathcal{C}, \|\Gamma_\mathcal{C}\| \rangle})  +\\
\sum_{\forall \Gamma_x \in \Gamma -\{\Gamma_\mathcal{C}\}} W_x(\Delta, D_x - E_x) &
\end{split} 
\end{equation}
$dbf(\Delta)$ can be solved by a fixed-point iteration with an initial condition of $\Delta=1$.
\end{theorem}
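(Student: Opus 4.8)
The plan is to adapt the NP-FP response-time argument of Lemma~\ref{lm:LeSh} to the chain setting, treating the last callback $\tau_{\langle \mathcal{C}, \|\Gamma_\mathcal{C}\| \rangle}$ as the analogue of the non-preemptive task $\tau_i$ whose start must be delayed as much as possible. Fixing the release time $t$ of $\Gamma_\mathcal{C}$, the goal is to identify the largest interval $\Delta$ during which no time unit in $[t, t+\Delta)$ can be seized by the last callback. By contrapositive, once $dbf(\Delta) < sbf_\Pi(\Delta)$ holds, the last callback must begin by $t+\Delta$; and since scheduling is non-preemptive, it then runs to completion, requiring $\overline{sbf}_k(E_{\langle \mathcal{C}, \|\Gamma_\mathcal{C}\| \rangle}-1)$ additional time on a worker thread after its first unit. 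This yields the stated bound $R_\mathcal{C}=\Delta+\overline{sbf}_k(E_{\langle \mathcal{C}, \|\Gamma_\mathcal{C}\| \rangle}-1)$.

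First I would account for the two disjoint sources of demand that can keep all $m$ threads busy before the last callback starts. The first is the chain's own precedence structure: the last callback cannot begin until every preceding callback of $\Gamma_\mathcal{C}$ has finished, and by Lemma~\ref{lm:chain_block} this sequential dependency inflates into an artificial workload of $m\cdot(E_\mathcal{C}-E_{\langle \mathcal{C}, \|\Gamma_\mathcal{C}\| \rangle})$, forming the first term of $dbf(\Delta)$. The second is the interference from every other chain $\Gamma_x\in\Gamma-\{\Gamma_\mathcal{C}\}$; each contributes its full workload $W_x(\Delta,\alpha)$ via Lemma~\ref{lm:chain_workload}, with $\alpha=D_x-E_x$ serving as a safe carry-in bound because every interfering chain is assumed schedulable, so $R_x\le D_x$ and the substitution is conservative.

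The crux of the argument, and the step I expect to be most delicate, is justifying why \emph{every} other chain is treated as higher-priority interference, in contrast to Lemma~\ref{lm:LeSh} where only $hp(\tau_i)$ contributes full workload while $mlp(\tau_i)$ contributes a bounded blocking term. The reason lies in the ReadySet/polling-point semantics: because the ReadySet is refreshed only at a polling point and any callback cached in it is dispatched before the next refresh, a callback of $\Gamma_\mathcal{C}$ released just after a polling point can be delayed by callbacks of other chains \emph{regardless} of their assigned priority. No priority protection is available within the window mechanism, so treating all other chains with their full workload is both necessary for safety and a valid over-approximation, since $W_x(\Delta,\alpha)\ge\min(E_x-1,\Delta)$; this simultaneously subsumes the separate lower-priority blocking term appearing in Eq.~\eqref{eq:LeSh}.

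Finally I would close exactly as in Lemma~\ref{lm:LeSh}: the total demand that must be serviced before the last callback can start is $dbf(\Delta)$, the guaranteed supply over the interval is $sbf_\Pi(\Delta)$, and whenever $dbf(\Delta)<sbf_\Pi(\Delta)$ at least one time unit in $[t,t+\Delta)$ is free for $\tau_{\langle \mathcal{C}, \|\Gamma_\mathcal{C}\| \rangle}$ to begin. Solving the inequality by fixed-point iteration from $\Delta=1$ returns the smallest such $\Delta$, and adding $\overline{sbf}_k(E_{\langle \mathcal{C}, \|\Gamma_\mathcal{C}\| \rangle}-1)$ for the non-preemptive completion of the last callback establishes the bound on $R_\mathcal{C}$.
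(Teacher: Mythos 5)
Your proposal is correct and follows essentially the same route as the paper's proof: the same decomposition into the precedence-dependency term $m\cdot(E_\mathcal{C}-E_{\langle \mathcal{C}, \|\Gamma_\mathcal{C}\| \rangle})$ via Lemma~\ref{lm:chain_block} plus full interfering workloads $W_x(\Delta, D_x-E_x)$ from all other chains via Lemma~\ref{lm:chain_workload}, closed by the standard $dbf<sbf$ argument for the first unit of the last callback and the non-preemptive tail $\overline{sbf}_k(E_{\langle \mathcal{C}, \|\Gamma_\mathcal{C}\| \rangle}-1)$. Your explicit justification of why \emph{all} other chains must be charged as interference (the ReadySet/polling-point semantics defeating any priority protection) is the same rationale the paper gives in the discussion preceding Lemma~\ref{lm:chain_workload} rather than inside the proof itself.
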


\begin{proof}
Eq.~\eqref{eq:PWA_CD} captures all possible workloads in $[t,t+\Delta)$,  where $t$ is the release time of an instance of $\Gamma_\mathcal{C}$, until the last callback of $\Gamma_\mathcal{C}$ can start its first unit of execution. 
By Lemma~\ref{lm:block}, we know that the maximum workload caused by precedence dependencies is $m \cdot (E_\mathcal{C}- E_{\langle \mathcal{C}, \|\Gamma_\mathcal{C}\| \rangle})$. In addition, the maximum workload from the instances of other chains is upper-bounded by the sum of $W_x$ given by Lemma~\ref{lm:chain_workload}. As there is no other source of workloads in $\Delta$, the last callback of $\Gamma_\mathcal{C}$ can start at least one unit of its execution in $\Delta$ if $dbf(\Delta)<sbf_\Pi(\Delta)$ holds, and the response time of $\Gamma_\mathcal{C}$ is bounded by $\Delta+$ $\overline{sbf}_k(E_{\langle \mathcal{C}, \|\Gamma_\mathcal{C}\| \rangle}-1)$.
\end{proof}

\subsection{Priority-Driven Enhancements}\label{sec:PPWA_CD}

As explained earlier, the ReadySet management scheme of the standard ROS 2 multi-threaded executor requires our analysis to capture all callback instances of other chains as interfering callbacks, regardless of their priorities. This is particularly problematic for critical chains as their response times could be unnecessarily penalized by non-critical chains. However, prior work on a single-threaded executor~\cite{PICAS, rage} demonstrated that assigning callback priorities based on their respective chain priorities and scheduling ready callbacks by strictly based on their priorities can alleviate analytical pessimism and improve chain response time. Hence, we apply this approach to multi-threaded executors and discuss its implications in analysis.   


The implementation of priority-driven scheduling in a multi-threaded executor is rather straightforward. Instead of having each thread look for ready callbacks in ReadySet and letting it update only when it is empty, we can modify the executor code such that each thread updates ReadySet whenever it needs to choose a ready callback to execute. As a result, a newly-arrived high-priority callback does not need to wait for the other callbacks already fetched in ReadySet to finish their executions so that the next PP is triggered. Updating ReadySet in this manner may seem like a lot of extra overhead, but we found the frequency of ReadySet updates in the priority-driven multi-threaded executor is not much higher than in the standard one.
Sec.~\ref{sec:case_study} analyzes this overhead. 

For callback priority assignment, we directly adopt the chain-aware assignment scheme proposed by \cite{PICAS}, shown in Alg.~\ref{al:picas}. Basically, it makes sure that callbacks from higher-priority chains get higher callback priority than those from lower-priority chains. Within each chain, earlier callbacks get lower priority than later callbacks, i.e., $\pi_{\langle \mathcal{C}, j\rangle}< \pi_{\langle \mathcal{C}, j+1\rangle}$. As discussed in Sec.~\ref{sec:system_model}, we follow the criticality-as-priority assignment~\cite{de2009scheduling} for chains; thus, this assignment yields higher priorities to callbacks from critical chains. Also, since the standard ROS 2 executor interface does not have APIs to set callback priorities (callback priorities are determined implicitly by declaration order), we also adopted such APIs from \cite{PICAS} and applied them to the multi-threaded executor code base. 

\begin{algorithm}[b]
\caption{Chain-aware callback priority assignment~\cite{PICAS}}\label{al:picas}
    \begin{algorithmic}[1]
    \State \textbf{Input:} $\Gamma$:chains\\     
    $\Gamma \leftarrow$ sort in ascending order of chain priority $\pi$ \\
    $p \leftarrow 1$ \Comment{Initialize current priority}
    \ForAll {$\Gamma_\mathcal{C} \in \Gamma$}
    \ForAll {$\tau_{\langle \mathcal{C}, j\rangle} \in \Gamma_\mathcal{C}$}
    \State $\pi_{\langle \mathcal{C}, j\rangle} \leftarrow p$
    \State $p \leftarrow p + 1$
    \EndFor
    \EndFor
    \end{algorithmic}
\end{algorithm}

With the executor modifications for strictly priority-driven callback scheduling and the callback priority assignment, a chain of interest is no longer blocked multiple times by callbacks from lower-priority chains. Now we provide a response-time analysis for priority-driven multi-threaded executors.



\begin{theorem}\label{tm:PPWA_CD}
The response time of a chain $\Gamma_\mathcal{C} = \{\tau_{\langle \mathcal{C},1 \rangle}, \tau_{ \langle \mathcal{C},2 \rangle}, ..., \tau_{ \langle \mathcal{C}, \|\Gamma_\mathcal{C}\| \rangle}\}$ with a \textbf{constrained deadline} on a \textbf{priority-driven} ROS 2 multi-threaded executor with $m$ threads is upper-bounded by $R_\mathcal{C}=\Delta +$ $\overline{sbf}_k(E_{\langle \mathcal{C}, \|\Gamma_\mathcal{C}\| \rangle}-1)$, if $dbf(\Delta)<sbf_\Pi(\Delta)$ holds for the following $dbf(\Delta)$:
\begin{equation}\begin{split}\label{eq:PPWA_CD}
dbf(\Delta) & = m \cdot (E_\mathcal{C}- E_{\langle \mathcal{C}, \|\Gamma_\mathcal{C}\| \rangle})\\
+ & \sum_{\forall \Gamma_x \in \Gamma -\{\Gamma_\mathcal{C}\} \land \pi_x > \pi_\mathcal{C}} W_x(\Delta, D_x - E_x)\\
+ & \sum_{\forall \tau_l \in mlp(\tau_{\langle \mathcal{C}, 1\rangle})} \min(E_l-1, \Delta)
\end{split} 
\end{equation}
where $mlp(\tau_{\langle \mathcal{C}, 1\rangle})$ returns at most $m$ largest callbacks with lower-priority than $\tau_{\langle \mathcal{C}, 1\rangle}$ where it includes only one callback from each chain (i.e., {\small$ |mlp(\tau_{\langle \mathcal{C}, 1\rangle})|=min(m, |\{\Gamma_y| \pi_y <\pi_\mathcal{C}\}|)$}).
\end{theorem}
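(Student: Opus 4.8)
The plan is to reuse the skeleton of Theorem~\ref{tm:PWA_CD} and graft onto it the lower-priority blocking structure of Lemma~\ref{lm:LeSh}, exploiting the fact that the priority-driven executor together with the chain-aware assignment of Alg.~\ref{al:picas} lets us cleanly separate interfering callbacks into three groups. As in Theorem~\ref{tm:PWA_CD}, I would bound the latest time at which the last callback $\tau_{\langle \mathcal{C}, \|\Gamma_\mathcal{C}\| \rangle}$ can begin its first unit of execution by decomposing the total demand in $[t, t+\Delta)$ into (i) the precedence-dependency blocking internal to $\Gamma_\mathcal{C}$, (ii) interference from chains that strictly outrank $\Gamma_\mathcal{C}$, and (iii) non-preemptive blocking from chains that are outranked by $\Gamma_\mathcal{C}$.

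For term (i), I would observe that the precedence-dependency argument of Lemmas~\ref{lm:block} and \ref{lm:chain_block} is purely about the sequential structure of $\Gamma_\mathcal{C}$ and is therefore independent of the scheduling policy; hence the contribution $m \cdot (E_\mathcal{C}- E_{\langle \mathcal{C}, \|\Gamma_\mathcal{C}\| \rangle})$ carries over verbatim. For term (ii), the key step is to note that Alg.~\ref{al:picas} assigns to every callback of a chain $\Gamma_x$ with $\pi_x > \pi_\mathcal{C}$ a priority strictly higher than that of every callback of $\Gamma_\mathcal{C}$; under strict priority-driven scheduling these callbacks are always chosen ahead of $\Gamma_\mathcal{C}$'s callbacks whenever a thread re-polls the ReadySet, so their full workload $W_x(\Delta, D_x-E_x)$ from Lemma~\ref{lm:chain_workload} may delay the start of the last callback, exactly as the higher-priority term in Lemma~\ref{lm:LeSh}. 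Crucially, and unlike Theorem~\ref{tm:PWA_CD}, chains with $\pi_x < \pi_\mathcal{C}$ no longer contribute their full workload, because the modified executor re-polls on every dispatch and will never start a lower-priority callback while a ready callback of $\Gamma_\mathcal{C}$ (or of a higher-priority chain) is pending.

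The crux of the argument is term (iii). I would bound the blocking from lower-priority chains by the same non-preemption reasoning as Lemma~\ref{lm:LeSh}: at the instant $\tau_{\langle \mathcal{C}, 1\rangle}$ is released, each of the $m$ threads may already be executing a lower-priority callback that cannot be preempted, and each such callback can delay $\Gamma_\mathcal{C}$ by at most its remaining $E_l-1$ units. Two refinements then tighten this to the stated $mlp(\tau_{\langle \mathcal{C}, 1\rangle})$ set. First, because $\tau_{\langle \mathcal{C}, 1\rangle}$ has the lowest priority within $\Gamma_\mathcal{C}$ (Alg.~\ref{al:picas} assigns increasing priorities along a chain), the callbacks of strictly lower priority than $\tau_{\langle \mathcal{C}, 1\rangle}$ are exactly those of the lower-priority chains, so $mlp$ correctly identifies the candidate blockers. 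Second, a single chain contributes at most one simultaneously-running callback: its callbacks execute sequentially, and with constrained deadlines at most one instance of the chain is active at a time, which is why $mlp$ admits only one callback per chain and has cardinality $\min(m, |\{\Gamma_y \mid \pi_y < \pi_\mathcal{C}\}|)$. The most delicate point to argue carefully is that this blocking is incurred only once, at the release of $\tau_{\langle \mathcal{C}, 1\rangle}$, rather than afresh before each subsequent callback: I would appeal to the re-polling behavior of the priority-driven executor (a freed thread never selects a lower-priority callback while $\Gamma_\mathcal{C}$ has pending work) together with the fact that the precedence gaps between consecutive callbacks of $\Gamma_\mathcal{C}$ are already charged to term (i) as artificial workload on all $m$ threads, leaving no fresh opening for a lower-priority callback to seize a thread. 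Once the three terms are shown to upper-bound the demand, the conclusion follows exactly as in Theorem~\ref{tm:PWA_CD}: $dbf(\Delta)<sbf_\Pi(\Delta)$ guarantees one free unit for the last callback in $[t,t+\Delta)$, giving $R_\mathcal{C}=\Delta + \overline{sbf}_k(E_{\langle \mathcal{C}, \|\Gamma_\mathcal{C}\| \rangle}-1)$.
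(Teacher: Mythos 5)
Your proposal is correct and follows essentially the same route as the paper's proof: the same three-way decomposition into precedence-dependency workload, higher-priority chain interference via Lemma~\ref{lm:chain_workload}, and at-most-$m$ non-preemptive blockers limited to one per lower-priority chain, all justified through the properties of Alg.~\ref{al:picas}. Your added remark on why the lower-priority blocking is charged only once (rather than before each callback of the chain) is a useful elaboration of a point the paper leaves implicit, but it does not change the argument.
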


\begin{proof}

    In priority-driven scheduling, we need to upper-bound (i) the workload caused by precedence dependencies, (ii) the interference from higher-priority chain instances, (iii) and the blocking time from lower-priority chain instances. Item (i) remains the same as in Eq.~\eqref{eq:PWA_CD} since it is inherent to the chain under analysis, not subject to the scheduling policy used.

    
    By Alg.~\ref{al:picas}, 
    all callbacks of a higher-priority chain have higher priorities than the chain of interest. Therefore, to upper-bound (ii), the workload function in Lemma~\ref{lm:chain_workload} can be used for higher-priority chains than $\Gamma_\mathcal{C}$, i.e., $\forall \Gamma_x \in \Gamma -\{\Gamma_\mathcal{C}\} \land \pi_x > \pi_\mathcal{C}$. 

    For (iii), lower-priority callbacks can block the execution time of  higher-priority callbacks only when they started execution at least for one unit before $\Delta$. In a $m$-threaded executor, the number of lower-priority callbacks that can do so is up to $m$. Thus, to upper-bound the blocking time, we can find the $m$ largest callbacks from other chains with lower callback priority than $\tau_{\langle \mathcal{C}, 1\rangle}$, which is the lowest-priority callback of $\Gamma_\mathcal{C}$ by Alg.~\ref{al:picas}. Here, the precedence dependency within a chain limits up to one lower-priority callback from each chain can contribute to the blocking time. In other words, each chain cannot have more than one callback that has started before $\Delta$ and continues execution in $\Delta$. This number is bounded by $|\{\Gamma_y| \pi_y <\pi_\mathcal{C}\}|$ since only chains with lower chain priority than $\Gamma_\mathcal{C}$ have callbacks with lower callback priority than $\tau_{\langle \mathcal{C}, 1\rangle}$. As a result, the number of lower-priority callbacks $\tau_l$ contributing to the blocking time is bounded by the minimum of these two conditions, and the total blocking time is obtained by summing $\min(E_l-1, \Delta)$ since those callbacks already started one unit of execution before $\Delta$. 
\end{proof}

\subsection{Chains with Arbitrary Deadlines}\label{sec:PWA_AD}

This section extends our response-time analysis to arbitrary-deadline chains, where new instances of a chain can arrive earlier than the completion of previous instances. 
First, we need to revise the workload function given in Lemma~\ref{lm:chain_workload} because it is valid for constrained deadlines only. The implicit assumption made in that lemma (and also Lemma~\ref{lm:workload}) is that each chain has only at most one instance as carry-in and at most one instance as carry-out in a time interval $\Delta$. With arbitrary deadlines ($T_\mathcal{C} < D_\mathcal{C}$), a chain can have multiple of its previous or next instances as carry-in jobs or as carry-out jobs, respectively, as their executions can overlap. Therefore, the workload of an arbitrary-deadline chain should be captured solely based on its period.



\begin{lemma}\label{lm:chain_workload_AD}
In an arbitrary time interval $\Delta$, the instances of a chain $\Gamma_\mathcal{C'}$ with an arbitrary deadline can execute up to \begin{equation}\label{eq:chain_workload_AD}
W^*_\mathcal{C'}(\Delta, \alpha)= \lceil \frac{\Delta+\alpha}{T_\mathcal{C'}}\rceil \cdot E_\mathcal{C'}      
\end{equation}
where $\alpha$ is an extra time to capture carry-in instances of $\Gamma_\mathcal{C'}$.
\end{lemma}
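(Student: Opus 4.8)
The plan is to upper-bound the work that instances of $\Gamma_{\mathcal{C}'}$ can contribute to an interval of length $\Delta$ by counting \emph{every} instance whose execution can overlap that interval and charging each one its full per-instance WCET $E_{\mathcal{C}'}$. This is deliberately coarser than the constrained-deadline bound of Lemma~\ref{lm:chain_workload}, which isolated the carry-in and carry-out as single (possibly partial) instances. That decomposition rests on $D_{\mathcal{C}'} \le T_{\mathcal{C}'}$, so at most one instance straddles each interval boundary; under arbitrary deadlines this assumption fails, since several earlier instances may still be incomplete when the interval begins. I therefore abandon the ``full instances plus one partial carry-in plus one partial carry-out'' form and over-approximate by treating all overlapping instances uniformly.

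First I would fix the interval of interest as $[t, t+\Delta)$ and, mirroring the role of $\alpha$ in Lemma~\ref{lm:chain_workload}, extend it backward to $[t-\alpha, t+\Delta)$. By the definition of $\alpha$ as the extra time that captures carry-in, every instance of $\Gamma_{\mathcal{C}'}$ that executes any work inside $[t, t+\Delta)$ --- whether released within the interval or released earlier as a carry-in --- has its release time inside this extended window of length $\Delta+\alpha$. Next I would count how many periodic releases, spaced $T_{\mathcal{C}'}$ apart, can fall in a half-open window of length $\Delta+\alpha$: this number is at most $\lceil (\Delta+\alpha)/T_{\mathcal{C}'} \rceil$, with the maximum attained precisely when one release coincides with the left endpoint $t-\alpha$, the worst-case release alignment.

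Finally, since any single instance of $\Gamma_{\mathcal{C}'}$ executes for at most $E_{\mathcal{C}'}$ over its entire lifetime, charging each overlapping instance its full WCET --- thereby safely ignoring that carry-in and carry-out instances may perform part of their execution outside $[t, t+\Delta)$ --- yields a valid upper bound. Multiplying the instance count by $E_{\mathcal{C}'}$ then gives $W^*_{\mathcal{C}'}(\Delta, \alpha) = \lceil (\Delta+\alpha)/T_{\mathcal{C}'} \rceil \cdot E_{\mathcal{C}'}$, as claimed. I expect the main obstacle to be the justification in the second step that a single backward extension by $\alpha$ really does capture \emph{all} carry-in instances in the arbitrary-deadline regime, where --- unlike the constrained case underlying Lemma~\ref{lm:chain_workload} --- many instances can be simultaneously pending; the argument must lean on $\alpha$ being chosen large enough (e.g., derived from a known response-time or deadline bound) that no work-contributing release lies to the left of $t-\alpha$, after which the ceiling count and the full-WCET charging are both routine over-approximations.
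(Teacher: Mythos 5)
Your proof takes essentially the same approach as the paper's, whose entire argument is the two-sentence observation that one should ``capture the maximum possible arrivals'' by replacing the floor in Lemma~\ref{lm:chain_workload} with a ceiling and dropping the partial carry-out term; your version simply fills in the release-counting details (extend the window backward by $\alpha$, count at most $\lceil (\Delta+\alpha)/T_{\mathcal{C}'}\rceil$ releases, charge each the full $E_{\mathcal{C}'}$). The subtlety you flag at the end --- whether a backward extension by $\alpha$ (taken as $D_{\mathcal{C}'}-E_{\mathcal{C}'}$ in the theorems that use this lemma) really captures every release that can still contribute work to the window when many instances are simultaneously pending --- is a genuine concern, but it is equally unaddressed by the paper's own proof, so your argument is no less rigorous than the original.
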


\begin{proof}
    The worst-case workload can be upper-bounded by capturing the maximum possible arrivals. Therefore, we replace the floor function in Lemma~\ref{lm:chain_workload} with the ceiling function and the second term of Lemma~\ref{lm:chain_workload} is no longer needed.
\end{proof}


With the revised workload function, we can extend Theorem~\ref{tm:PWA_CD} to arbitrary deadlines as follows:

\begin{theorem}\label{tm:PWA_AD}
The response time of a chain $\Gamma_\mathcal{C} = \{\tau_{\langle \mathcal{C},1 \rangle}, \tau_{ \langle \mathcal{C},2 \rangle}, ..., \tau_{ \langle \mathcal{C}, \|\Gamma_\mathcal{C}\| \rangle}\}$ with an \textbf{arbitrary deadline} on a \textbf{standard} ROS 2 multi-threaded executor with $m$ threads is upper-bounded by $R_\mathcal{C}=\Delta + \overline{sbf}_k(E_{\langle \mathcal{C}, \|\Gamma_\mathcal{C}\| \rangle}-1)$, if $dbf(\Delta)<sbf_\Pi(\Delta)$ holds for the following $dbf(\Delta)$:

\begin{equation}\begin{split}\label{eq:PWA_AD}
dbf(\Delta) &= m \cdot (E_\mathcal{C}- E_{\langle \mathcal{C}, \|\Gamma_\mathcal{C}\| \rangle}) \\
&+ (\sum_{\forall \Gamma_x \in \Gamma} W^*_x(\Delta, D_x - E_x)) - E_\mathcal{C}
\end{split} 
\end{equation}
\end{theorem}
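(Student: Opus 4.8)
The plan is to mirror the argument of Theorem~\ref{tm:PWA_CD}: I would identify all workload that must be discharged in $[t, t+\Delta)$ before the last callback of the analyzed instance of $\Gamma_\mathcal{C}$ can begin its first unit, and then invoke the same condition $dbf(\Delta) < sbf_\Pi(\Delta)$ to conclude that a free unit exists in the interval, yielding $R_\mathcal{C} = \Delta + \overline{sbf}_k(E_{\langle \mathcal{C}, \|\Gamma_\mathcal{C}\| \rangle} - 1)$. The only genuinely new phenomenon relative to Theorem~\ref{tm:PWA_CD} is that, with $D_\mathcal{C} > T_\mathcal{C}$, $\Gamma_\mathcal{C}$ may release a burst of instances before the one under analysis completes, so earlier instances of the same chain now contribute interference that the constrained-deadline analysis was entitled to ignore.

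First I would keep the precedence-dependency term untouched: by Lemma~\ref{lm:chain_block}, the preceding callbacks of the analyzed instance delay its last callback by an artificial workload of $m \cdot (E_\mathcal{C} - E_{\langle \mathcal{C}, \|\Gamma_\mathcal{C}\| \rangle})$, and this is a structural property of the chain that is independent of the deadline regime. Next I would replace the constrained-deadline bound $W_x$ by the arbitrary-deadline bound $W^*_x$ of Lemma~\ref{lm:chain_workload_AD}, which uses a ceiling and drops the carry-out term precisely because multiple instances of a chain may now be simultaneously carried in or out. The essential change is that the interference sum now ranges over all of $\Gamma$, including $\Gamma_\mathcal{C}$ itself, since the earlier burst instances of $\Gamma_\mathcal{C}$ are legitimate competitors for the $m$ threads.

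Including $\Gamma_\mathcal{C}$ in the sum, however, makes $W^*_\mathcal{C}(\Delta, D_\mathcal{C} - E_\mathcal{C})$ count every instance of $\Gamma_\mathcal{C}$ in the window, among them the instance under analysis, whose own workload is not interference to itself: its preceding callbacks are already charged through the precedence term, and its last callback is exactly the entity whose start we are locating. I would therefore subtract one full chain WCET, $-E_\mathcal{C}$, to cancel the self-contribution of the analyzed instance, leaving only the workload of the genuine burst interferers. Once these pieces are assembled, the closing argument is identical to Theorem~\ref{tm:PWA_CD}: if $dbf(\Delta) < sbf_\Pi(\Delta)$, at least one time unit is free in $[t, t+\Delta)$ for the last callback to start, and the stated bound follows.

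The main obstacle is arguing cleanly that there is no double counting between the precedence term and the self-interference term. The point I would make explicit is that subtracting $E_\mathcal{C}$ removes the entire $E_\mathcal{C}$ that $W^*_\mathcal{C}$ attributes to the analyzed instance, covering all of its callbacks, preceding and last alike, so the net charge on the analyzed instance in the second term is zero; its only contribution is the $m$-fold precedence blocking of the first term, exactly as in the constrained case. A secondary subtlety worth checking is that $\alpha = D_x - E_x$ remains a sound carry-in slack for $W^*_x$ even when $D_x > T_x$, but this follows from the definition of $W^*_x$ in Lemma~\ref{lm:chain_workload_AD}, which bounds arrivals purely by the period and is agnostic to whether $\alpha$ exceeds $T_x$.
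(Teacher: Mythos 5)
Your proposal is correct and follows essentially the same route as the paper's own proof: keep the precedence term from Theorem~\ref{tm:PWA_CD}, switch to the arbitrary-deadline workload bound $W^*_x$ of Lemma~\ref{lm:chain_workload_AD}, extend the sum over all of $\Gamma$ to capture self-interference from burst instances, and subtract $E_\mathcal{C}$ to cancel the analyzed instance's own contribution and avoid double-counting. Your added remarks on why the subtraction leaves no residual double count with the precedence term, and on the soundness of $\alpha = D_x - E_x$ when $D_x > T_x$, are consistent elaborations rather than a different argument.
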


\begin{proof}
    The difference from Theorem~\ref{tm:PWA_CD} is the second and third terms, which capture the maximum workload from interfering chain instances.
    Here, interfering chain instances include {\em other} instances of $\Gamma_\mathcal{C}$ itself because they can cause self-interference to the instance being analyzed. By considering $\Gamma_\mathcal{C}$ in the summing term ($\forall \Gamma_x \in \Gamma$, instead of $\forall \Gamma_x \in \Gamma -\{\Gamma_\mathcal{C}\}$), $W^*_\mathcal{C}(\Delta, D_\mathcal{C}-E_\mathcal{C})$ gives all instances of $\Gamma_\mathcal{C}$ including the instance being analyzed. 
    Therefore, to avoid double-counting, $E_\mathcal{C}$ needs to be subtracted from $dbf(\Delta)$.
\end{proof}

Next is the extension of Theorem~\ref{tm:PPWA_CD} for priority-driven multi-threaded executor analysis.

\begin{theorem}\label{tm:PPWA_AD}
The response time of a chain $\Gamma_\mathcal{C} = \{\tau_{\langle \mathcal{C},1 \rangle}, \tau_{ \langle \mathcal{C},2 \rangle}, ..., \tau_{ \langle \mathcal{C}, \|\Gamma_\mathcal{C}\| \rangle}\}$ with an \textbf{arbitrary deadline} on a \textbf{priority-driven} ROS 2 multi-threaded executor with $m$ threads is upper-bounded by $R_\mathcal{C}=\Delta + \overline{sbf}_k(E_{\langle \mathcal{C}, \|\Gamma_\mathcal{C}\| \rangle}-1)$, if $dbf(\Delta)<sbf_\Pi(\Delta)$ holds for the following $dbf(\Delta)$:
\begin{equation}\begin{split}\label{eq:PPWA_AD}
dbf(\Delta) &= m \cdot (E_\mathcal{C}- E_{\langle \mathcal{C}, \|\Gamma_\mathcal{C}\| \rangle}) \\
&+ (\sum_{\forall \Gamma_x \in \Gamma \land \pi_x \ge \pi_\mathcal{C}} W^*_x(\Delta, D_x - E_x)) - E_\mathcal{C}\\ 
&+ (\sum_{\forall \tau_l \in mlp^*(\tau_{\langle \mathcal{C}, 1\rangle}, \Delta)} \min(E_l-1, \Delta)) 
\end{split}
\end{equation}
where $mlp^*(\tau_{\langle \mathcal{C}, 1\rangle}, \Delta)$ returns at most $m$ largest callbacks with lower priority than $\tau_{\langle \mathcal{C}, 1\rangle}$ while it includes only one callback from each instance of chains released in $\Delta$ (i.e., {\small $|mlp^*(\tau_{\langle \mathcal{C}, 1\rangle},\Delta)|=min(m, \sum_{\pi_y<\pi_\mathcal{C}} \lceil \frac{\Delta+D_y-E_y}{T_y}\rceil)$}).
\end{theorem}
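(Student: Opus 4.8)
The plan is to prove this theorem by reusing the three-part decomposition of $dbf(\Delta)$ introduced in the proof of Theorem~\ref{tm:PPWA_CD}, namely (i) the precedence-dependency workload inherent to $\Gamma_\mathcal{C}$, (ii) the interference from equal-or-higher-priority chain instances, and (iii) the blocking from lower-priority chain instances, and then upgrading each part from constrained to arbitrary deadlines using the machinery already established in Theorem~\ref{tm:PWA_AD} and Lemma~\ref{lm:chain_workload_AD}. As before, the target is the latest time $\Delta$ at which the last callback $\tau_{\langle \mathcal{C}, \|\Gamma_\mathcal{C}\|\rangle}$ can still be waiting to start its first unit of execution; once $dbf(\Delta) < sbf_\Pi(\Delta)$ I would conclude $R_\mathcal{C} = \Delta + \overline{sbf}_k(E_{\langle \mathcal{C}, \|\Gamma_\mathcal{C}\|\rangle}-1)$ exactly as in the earlier theorems, solved by fixed-point iteration from $\Delta = 1$.

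For part (i), I would simply invoke that the precedence blocking $m\cdot(E_\mathcal{C} - E_{\langle \mathcal{C}, \|\Gamma_\mathcal{C}\|\rangle})$ from Lemma~\ref{lm:block} and Lemma~\ref{lm:chain_block} is a property of the chain under analysis alone, and is therefore insensitive to both the scheduling policy and the deadline regime; hence the first term carries over verbatim. For part (ii), I would combine the two ideas from the theorems being extended: the priority-driven restriction of Theorem~\ref{tm:PPWA_CD} limits the interfering chains to those with $\pi_x \ge \pi_\mathcal{C}$, while the arbitrary-deadline treatment of Theorem~\ref{tm:PWA_AD} forces the use of the ceiling-based workload $W^*_x$ of Lemma~\ref{lm:chain_workload_AD} and additionally counts self-interference from the other in-flight instances of $\Gamma_\mathcal{C}$ itself. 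Writing the sum over $\pi_x \ge \pi_\mathcal{C}$ includes $\Gamma_\mathcal{C}$ in the interference set, so $W^*_\mathcal{C}(\Delta, D_\mathcal{C}-E_\mathcal{C})$ already accounts for all instances of $\Gamma_\mathcal{C}$ \emph{including} the one being analyzed; subtracting $E_\mathcal{C}$ then removes that double-counted instance, which is precisely the correction used in Theorem~\ref{tm:PWA_AD}.

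The crux of the argument, and the step I expect to be the main obstacle, is part (iii): bounding the lower-priority blocking when multiple instances of the same low-priority chain may be simultaneously in flight. Under non-preemptive global scheduling, at most $m$ lower-priority callbacks can have begun execution before the interval and thereby block $\Gamma_\mathcal{C}$, so the cap of $m$ is inherited unchanged. The new subtlety is the candidate set: in the constrained case each lower-priority \emph{chain} contributed at most one blocking callback (one active instance, and intra-chain precedence permits only one active callback per instance), giving the count $|\{\Gamma_y \mid \pi_y < \pi_\mathcal{C}\}|$. With arbitrary deadlines a single chain $\Gamma_y$ can have several instances overlapping $\Delta$, each of which --- still by intra-instance precedence --- can expose at most one callback straddling the start of the interval. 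I would therefore argue that the number of candidate lower-priority callbacks equals the number of lower-priority chain \emph{instances} released across the window, bounded per chain by $\lceil \frac{\Delta + D_y - E_y}{T_y}\rceil$ (the same arrival count as in $W^*_y$, with the $D_y - E_y$ carry-in slack). Summing over $\pi_y < \pi_\mathcal{C}$ and intersecting with the hardware cap yields $|mlp^*(\tau_{\langle \mathcal{C},1\rangle},\Delta)| = \min\bigl(m, \sum_{\pi_y<\pi_\mathcal{C}} \lceil \frac{\Delta + D_y - E_y}{T_y}\rceil\bigr)$, and selecting the $m$ largest $\min(E_l - 1, \Delta)$ values over this candidate set upper-bounds the total blocking. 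The delicate point to get right is reconciling the ``released in $\Delta$'' counting of instances with the ``started one unit before the interval'' semantics of non-preemptive blocking, i.e., confirming that no active low-priority instance is omitted and none is counted twice; once this is pinned down, the three terms sum to $dbf(\Delta)$ and the theorem follows as above.
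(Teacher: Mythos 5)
Your proposal is correct and follows essentially the same route as the paper's proof: the same three-part decomposition, the same use of the ceiling-based workload $W^*_x$ over $\pi_x \ge \pi_\mathcal{C}$ with the $E_\mathcal{C}$ subtraction to undo double-counting, and the same per-instance (rather than per-chain) enumeration of lower-priority blocking candidates capped at $m$. Your closing remark about reconciling the ``released in $\Delta$'' count with the ``started one unit before'' semantics flags a subtlety the paper's own proof also leaves at the same level of informality, so nothing further is required of you beyond what the paper itself provides.
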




\begin{proof}



As explained in the proof of Theorem~\ref{tm:PWA_AD}, interfering chain instances include other instances of $\Gamma_\mathcal{C}$ itself. Hence, the second term considers $\Gamma_\mathcal{C}$ by $\pi_x\ge \pi_\mathcal{C}$ instead of $\pi_x > \pi_\mathcal{C}$. As in Eq.~\eqref{eq:PWA_AD}, $E_\mathcal{C}$ needs to be deducted to avoid double-counting. 


For blocking time, we know that lower-priority callbacks can block higher-priority callbacks only when they started execution for at least one unit before $\Delta$, and the maximum number of such callbacks is bounded to $m$. The difference from the constrained-deadline case is that each of such callbacks contributing to the blocking time can be originated from each instance of other chains because there may exist multiple outstanding instances of the same chain during $\Delta$. This number is bounded by $\sum_{\pi_y<\pi_\mathcal{C}} \lceil \frac{\Delta+D_y-E-y}{T_y}\rceil$. Hence, the number of lower-priority callbacks $\tau_l$ contributing to the blocking time is bounded by the minimum of these two conditions, and the total blocking time can be obtained by maximizing the sum of $\min(E_l -1, \Delta)$.

\end{proof}

\subsection{Mutually-Exclusive Callback Groups }\label{sec:callback-groups}


Our analysis in the previous sections is for chains with reentrant callback groups. In this section, we study the effects of mutually-exclusive callback groups which introduce yet another type of precedence dependency in callback scheduling, adding more workload to $dbf(\Delta)$. As mentioned in Sec.~\ref{sec:architecture}, the use of a mutually-exclusive callback group limits any callback within this group not to be executed in parallel. In other words, the execution of callbacks within a mutually-exclusive group is all serialized. \footnote{Compared to the conference version of this paper that appeared at RTAS’23, Theorems 5 and 6 and Algorithms 2 and 3 in this section have been revised. We repeated the experiments and reported them in Sec.~\ref{EVAL}. We did not observe significant differences as a result of this correction.}


If the mutually-exclusive callback group option is enabled for some callbacks, Theorems~\ref{tm:PWA_CD} and \ref{tm:PWA_AD} (standard ROS 2 multi-threaded executor) need to be extended as follows.

\begin{theorem}\label{tm:group}
    With \textbf{mutually-exclusive callback groups}, the response time of a chain $\Gamma_\mathcal{C} = \{\tau_{\langle \mathcal{C},1 \rangle}, \tau_{ \langle \mathcal{C},2 \rangle}, ..., \tau_{ \langle \mathcal{C}, \|\Gamma_\mathcal{C}\| \rangle}\}$ on a \textbf{standard} ROS 2 multi-threaded executor with $m$ threads is upper-bounded by $R_\mathcal{C}=\Delta+\overline{sbf}_k(E_{\langle \mathcal{C}, \|\Gamma_\mathcal{C}\| \rangle}-1)$, if $dbf(\Delta) < sbf_\Pi(\Delta)$ holds for the following $dbf(\Delta)$:
    \begin{equation}\label{eq:group}
    \begin{split}
    dbf(\Delta) &= \textrm{\em RHS ~ of ~ Eq.~\eqref{eq:PWA_CD} ~ or ~ Eq.~\eqref{eq:PWA_AD}}\\
    &+\sum_{j=1}^{\| \Gamma_\mathcal{C} \|} m \cdot \textrm{\em groupmates\_load}(\tau_{\langle \mathcal{C}, j \rangle}, \Delta, \textit{CD})
    \end{split}
    \end{equation}
    where $\text{\em groupmates\_load}(\tau_{\langle \mathcal{C}, j \rangle}, \Delta, \textit{CD})$ calculates the total execution time of callback instances within the same mutually-exclusive group as $\tau_{\langle \mathcal{C}, j \rangle}$.
    The boolean variable $\textit{CD}$ is set to true in the constrained-deadline case, and false in the arbitrary-deadline case.
\end{theorem}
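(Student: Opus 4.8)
The plan is to treat the mutually-exclusive blocking as an additional demand term layered on top of the reentrant-case analysis, reusing the same ``occupies all $m$ threads'' argument that underlies Lemma~\ref{lm:block}. The starting point is that the RHS of Eq.~\eqref{eq:PWA_CD} (or Eq.~\eqref{eq:PWA_AD}) already soundly bounds every source of delay present under reentrant groups, namely the intra-chain precedence dependency and the interference/blocking from other chains. What remains is to quantify the extra delay that a mutually-exclusive group imposes on each callback $\tau_{\langle \mathcal{C}, j\rangle}$ of the chain under analysis and to add it to $dbf(\Delta)$, so that the same $dbf(\Delta)<sbf_\Pi(\Delta)$ fixed-point condition still guarantees the last callback at least one unit of execution within $\Delta$.

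First I would fix a callback $\tau_{\langle \mathcal{C}, j\rangle}$ that belongs to some mutually-exclusive group and observe that whenever $\tau_{\langle \mathcal{C}, j\rangle}$ is ready but one of its groupmates is executing, $\tau_{\langle \mathcal{C}, j\rangle}$ cannot begin on \emph{any} thread, exactly as in the precedence case of Lemma~\ref{lm:block}. Hence each time unit of groupmate execution must be charged as if it occupied all $m$ threads, contributing $m$ units of demand. Next I would bound the total groupmate execution that can fall inside $[t, t+\Delta)$; this is precisely what $\textrm{groupmates\_load}(\tau_{\langle \mathcal{C}, j\rangle}, \Delta, \textit{CD})$ returns, built from the per-chain workload function of Lemma~\ref{lm:chain_workload} when $\textit{CD}$ is true and of Lemma~\ref{lm:chain_workload_AD} when it is false, so that carry-in and multiple outstanding instances are treated consistently with the base theorem. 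Multiplying by $m$ yields the extra demand $m\cdot\textrm{groupmates\_load}(\tau_{\langle \mathcal{C}, j\rangle},\Delta,\textit{CD})$ seen by a single callback of $\Gamma_\mathcal{C}$.

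Finally I would sum this contribution over all $\|\Gamma_\mathcal{C}\|$ callbacks of the chain, since the mutual-exclusion delay is incurred independently at each callback along the chain before the last callback can start its first unit of execution. Adding this sum to the reentrant-case $dbf(\Delta)$ gives Eq.~\eqref{eq:group}, and the response-time bound $R_\mathcal{C}=\Delta+\overline{sbf}_k(E_{\langle \mathcal{C}, \|\Gamma_\mathcal{C}\| \rangle}-1)$ then follows by the identical supply-versus-demand argument used in Theorems~\ref{tm:PWA_CD} and \ref{tm:PWA_AD}.

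The main obstacle I expect is arguing soundness of the summation as a genuine upper bound while ruling out under-counting: I must ensure that $\textrm{groupmates\_load}$ charges each blocking groupmate instance for its full worst-case execution and over the correct arrival window in both the constrained- and arbitrary-deadline regimes, and that callbacks of $\Gamma_\mathcal{C}$ itself sharing a group (whose serialization is already implied by the chain's precedence order) are not added as fresh blocking. Any double-counting across the $j$-summation only inflates $dbf(\Delta)$ and therefore preserves safety, so the delicate direction is confirming that no blocking scenario is missed---in particular that a groupmate active at $\tau_{\langle \mathcal{C}, j\rangle}$'s readiness instant is always captured by the workload function feeding $\textrm{groupmates\_load}$.
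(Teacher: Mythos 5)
Your proposal matches the paper's proof essentially step for step: the extra mutual-exclusion blocking is charged as $m$ units of demand per unit of groupmate execution (via the Lemma~\ref{lm:block} ``occupies all $m$ threads'' argument), the number of blocking groupmate instances in $\Delta$ is bounded by a ceiling-style arrival count, the contributions are summed over all callbacks of $\Gamma_\mathcal{C}$ under the worst case that every callback sits in a different group, and same-chain groupmates are excluded in the constrained-deadline case to avoid double-counting with the precedence-dependency term. The only minor divergence is that the paper's Alg.~\ref{al:group} always uses the ceiling bound $\lceil(\Delta+D_x-E_x)/T_x\rceil\cdot E_{\langle x,k\rangle}$ and lets \textit{CD} control only the same-chain exclusion, whereas you describe \textit{CD} as also switching between the floor- and ceiling-based workload functions; either reading yields a safe bound.
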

The pseudo-code of the $\textrm{groupmates\_load}(\tau_{\langle \mathcal{C}, j \rangle}, \Delta, \textit{CD})$ function is given by Alg.~\ref{al:group}.
\begin{algorithm}
\caption{$\textrm{groupmates\_load}(\tau_{\langle \mathcal{C}, j \rangle}, \Delta, \textit{CD})$}\label{al:group}
    \begin{algorithmic}[1]
        \State $retval = 0$;
        \State $group$ = get\_callback\_group($\tau_{\langle \mathcal{C}, j \rangle}$) - $\{ \tau_{\langle \mathcal{C}, j \rangle}\}$;
        
        \ForAll{$\tau_{\langle x, k \rangle} \in group$}
        \If{$x \neq \mathcal{C} \; \lor \; \textit{CD}=\textit{false}$}\label{ln:group_exclude_same_chain}
        \State $retval = retval + \lceil \frac{\Delta + D_x - E_x}{T_x} \rceil \cdot E_{\langle x, k \rangle}$
        \EndIf
        \EndFor
        \State $return~~ retval$;
    \end{algorithmic}
\end{algorithm}
\begin{proof}
  This theorem differs from Theorem~\ref{tm:PWA_CD} or Theorem~\ref{tm:PWA_AD} by only the extra workload added to the end of $dbf(\Delta)$. So, 
  we prove the necessity of the last term, $\sum_{j=1}^{\| \Gamma_\mathcal{C} \|} m \cdot \text{groupmates\_load}(\tau_{\langle \mathcal{C}, j \rangle}, \Delta, \textit{CD})$.


  Based on the characteristics of a mutually-exclusive callback group, once an instance of a callback $\tau_{\langle x, k \rangle}$ of the group $G_1$ executes, $m \cdot E_{\langle x, k \rangle}$ is the safe upper-bound on the blocking time that instance causes to its group-mate callbacks in $G_1$, as shown by Lemma~\ref{lm:block}. Since multiple instances of $\tau_{\langle x, k \rangle}$ can cause this blocking time to its group-mate callbacks, we need to consider the maximum possible number of instances of $\tau_{\langle x, k \rangle}$ in the time interval $\Delta$, which is bounded by $\lceil \frac{\Delta+D_x-E_x}{T_x} \rceil$ as shown in Lemma~\ref{lm:chain_workload_AD}. Thus, the maximum blocking to a callback $\tau_j$ from all of its group-mates can be bounded by $m \cdot \sum_{\forall \tau_{\langle x, k \rangle} \in G_1 \land \tau_{\langle x, k \rangle} \neq \tau_j} \lceil \frac{\Delta+D_x-E_x}{T_x} \rceil . E_{\langle x, k \rangle}$, where $\Sigma$ value is the return value of $\text{groupmates\_load}(\tau_{\langle \mathcal{C}, j \rangle}, \Delta, \textit{CD})$.

  The effect of mutually-exclusive callback groups on a chain $\Gamma_\mathcal{C}$ is maximized when all callbacks of $\Gamma_\mathcal{C}$ are in different groups and all of their group-mates block their execution as long as possible. Therefore, $m \cdot \text{groupmates\_load}(\tau_{\langle \mathcal{C}, j \rangle}, \Delta, \textit{CD})$ should be considered as part of $dbf(\Delta)$ for each $\tau_{\langle \mathcal{C}, j\rangle} \in \Gamma_\mathcal{C}$.

 As can be seen, $\text{groupmates\_load}(\tau_{\langle \mathcal{C}, j \rangle}, \Delta, \textit{CD})$ excludes the blocking from group-mate callbacks that are from $\Gamma_\mathcal{C}$, the chain under analysis, when $\textit{CD}$ is true which means in the constrained-deadline case (line~\ref{ln:group_exclude_same_chain} of Alg.~\ref{al:group}). This is because the blocking effect from callbacks of the same chain has been already considered as precedence-dependency blocking in the first term of Eq.~\eqref{eq:PWA_CD} and Eq.~\eqref{eq:PWA_AD} and there is no additional blocking from its other instances for the constrained-deadline case. Therefore, we avoid double-counting them in Alg.~\ref{al:group}.
\end{proof}

Theorems~\ref{tm:PPWA_CD} and \ref{tm:PPWA_AD} for priority-driven multi-threaded executors are changed as follows in the presence of mutually-exclusive groups.

\begin{theorem}\label{tm:group_p}
    With \textbf{mutually-exclusive callback groups}, the response time of a chain $\Gamma_\mathcal{C} = \{\tau_{\langle \mathcal{C},1 \rangle}, \tau_{ \langle \mathcal{C},2 \rangle}, ..., \tau_{ \langle \mathcal{C}, \|\Gamma_\mathcal{C}\| \rangle}\}$ on a \textbf{priority-driven} ROS 2 multi-threaded executor with $m$ threads is upper-bounded by $R_\mathcal{C}=\Delta+\overline{sbf}_k(E_{\langle \mathcal{C}, \|\Gamma_\mathcal{C}\| \rangle}-1)$, if $dbf(\Delta)<sbf_\Pi(\Delta)$ holds for the following $dbf(\Delta)$:
    \begin{equation}\label{eq:group_p}
    \begin{split}
    dbf(\Delta) &= \textrm{\em RHS ~ of ~ Eq.~\eqref{eq:PPWA_CD} ~ or ~ Eq.~\eqref{eq:PPWA_AD}}\\
    &+ \sum_{j=1}^{\| \Gamma_\mathcal{C} \|} m \cdot \textrm{\em hp\_groupmates\_load}(\tau_{\langle \mathcal{C}, j \rangle}, \Delta, \textit{CD})
    \end{split}
    \end{equation}
    where $\textrm{\em hp\_groupmates\_load}(\tau_{\langle \mathcal{C}, j \rangle}, \Delta, \textit{CD})$ sums up the execution time of higher-priority callbacks instances than $\tau_{\langle \mathcal{C}, j \rangle}$ in the same mutually-exclusive group as $\tau_{\langle \mathcal{C}, j \rangle}$. The boolean variable $\textit{CD}$ is set to true in the constrained-deadline case, and false in the arbitrary-deadline case.

\end{theorem}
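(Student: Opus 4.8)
The plan is to mirror the proof of Theorem~\ref{tm:group} but exploit the stronger guarantees of priority-driven scheduling to restrict the set of blocking group-mates. Since this theorem differs from Theorem~\ref{tm:PPWA_CD} (constrained deadlines) and Theorem~\ref{tm:PPWA_AD} (arbitrary deadlines) only by the extra term $\sum_{j=1}^{\|\Gamma_\mathcal{C}\|} m \cdot \textrm{hp\_groupmates\_load}(\tau_{\langle \mathcal{C}, j \rangle}, \Delta, \textit{CD})$ appended to $dbf(\Delta)$, everything captured by the right-hand side of Eq.~\eqref{eq:PPWA_CD} or Eq.~\eqref{eq:PPWA_AD} (precedence-dependency workload, higher-priority chain interference, and lower-priority blocking bounded by $m$ callbacks) carries over unchanged. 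I would therefore devote the proof solely to showing that this extra term is a safe upper bound on the additional workload that mutually-exclusive groups impose on $\Gamma_\mathcal{C}$.

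First I would bound the blocking a single group-mate inflicts. Following the argument of Lemma~\ref{lm:block}, whenever an instance of a group-mate callback $\tau_{\langle x, k \rangle}$ sharing $\tau_{\langle \mathcal{C}, j \rangle}$'s mutually-exclusive group runs, serialization prevents $\tau_{\langle \mathcal{C}, j \rangle}$ from executing even on otherwise-idle threads, so this is safely modeled as an artificial workload of $m \cdot E_{\langle x, k \rangle}$. The number of instances of each group-mate that can appear in $[t, t+\Delta)$ is at most $\lceil \tfrac{\Delta + D_x - E_x}{T_x} \rceil$, exactly the arrival bound of Lemma~\ref{lm:chain_workload_AD}. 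Multiplying, summing over the relevant group-mates, and then over all $j$ (pessimistically assuming every callback of $\Gamma_\mathcal{C}$ lies in a distinct group whose mates block it maximally, as in Theorem~\ref{tm:group}) yields the claimed term.

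The crux of the argument — and the main point where this proof departs from Theorem~\ref{tm:group} — is justifying that only \emph{higher}-priority group-mates need be summed (hence $\textrm{hp\_groupmates\_load}$ rather than $\textrm{groupmates\_load}$). Here I would invoke the strictly priority-driven scheduler: whenever $\tau_{\langle \mathcal{C}, j \rangle}$ is ready, no lower-priority group-mate can be \emph{selected} ahead of it, so a lower-priority group-mate can delay $\tau_{\langle \mathcal{C}, j \rangle}$ only by having already begun a non-preemptive execution before $\tau_{\langle \mathcal{C}, j \rangle}$ became ready. The key observation is that such delay is already subsumed by the lower-priority blocking term of Eq.~\eqref{eq:PPWA_CD}/Eq.~\eqref{eq:PPWA_AD}: at most $m$ callbacks can be mid-execution across the $m$ threads, and that term already charges the $m$ largest such lower-priority callbacks. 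Adding lower-priority group-mates a second time would therefore double-count; excluding them is both sound and tighter. I expect this non-double-counting argument to be the delicate step, since soundness of an upper bound is never threatened by the additive over-approximation of the higher-priority mates, but is threatened precisely by the \emph{exclusion} decisions, which require confirming that every mechanism by which a lower-priority group-mate can block is indeed absorbed by the base blocking term.

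Finally I would dispatch the $\textit{CD}$ flag exactly as in Theorem~\ref{tm:group} and line~\ref{ln:group_exclude_same_chain} of Alg.~\ref{al:group}: in the constrained-deadline case the contribution of same-chain group-mates is already counted as precedence-dependency blocking in the first term of Eq.~\eqref{eq:PPWA_CD}, with no further instances overlapping, so they are excluded; in the arbitrary-deadline case, outstanding instances of $\Gamma_\mathcal{C}$ can overlap and self-interfere, so same-chain higher-priority group-mates (necessarily later callbacks under Alg.~\ref{al:picas}) must be retained. Collecting these pieces establishes that $dbf(\Delta)$ upper-bounds all workload that can delay the first unit of the last callback, and the response-time bound $R_\mathcal{C}=\Delta+\overline{sbf}_k(E_{\langle \mathcal{C}, \|\Gamma_\mathcal{C}\| \rangle}-1)$ follows as in Theorem~\ref{tm:PPWA_CD}.
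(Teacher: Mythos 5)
Your proposal follows essentially the same route as the paper: the paper's proof of Theorem~\ref{tm:group_p} is a two-line remark that everything carries over from Theorem~\ref{tm:group} except that, under priority-driven scheduling, only higher-priority group-mates execute ahead of the callback of interest and hence only they contribute group blocking, which is exactly the structure you reproduce (per-instance blocking of $m \cdot E_{\langle x,k\rangle}$ via Lemma~\ref{lm:block}, the arrival bound $\lceil (\Delta+D_x-E_x)/T_x\rceil$, the $\textit{CD}$ flag handling, and the restriction to higher-priority mates). Your justification for excluding lower-priority group-mates (that their non-preemptive carry-over blocking is absorbed by the existing $mlp$ blocking term) is more explicit than the paper's one-line rationale, but it is the same exclusion decision and the same overall argument.
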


The $\text{hp\_groupmates\_load}(\tau_{\langle \mathcal{C}, j \rangle}, \Delta, \textit{CD})$ function is given by Alg.~\ref{al:group_p}.

\begin{algorithm}[t]
\caption{$\text{hp\_groupmates\_load}(\tau_{\langle \mathcal{C}, j \rangle}, \Delta, , \textit{CD})$}\label{al:group_p}
    \begin{algorithmic}[1]
        \State $retval = 0$;
        \State $group$ = get\_callback\_group($\tau_{\langle \mathcal{C}, j \rangle}$) - $\{ \tau_{\langle \mathcal{C}, j \rangle}\}$;
        
        \ForAll{$\tau_{\langle x, k \rangle} \in group$}
        \If{$(x \neq \mathcal{C} \; \lor \; \textit{CD}=\textit{false}) \land \pi_{{\langle x, k \rangle}} > \pi_{{\langle \mathcal{C}, j \rangle}}$}
        \State $retval = retval + \lceil \frac{\Delta + D_x - E_x}{T_x} \rceil \cdot E_{\langle x, k \rangle}$
        \EndIf
        \EndFor
        \State $return~~ retval$;
    \end{algorithmic}
\end{algorithm}

\begin{proof}
    The proof is similar to that for Theorem~\ref{tm:group}. The only difference here is that higher-priority group-mate callbacks are the ones who execute first and cause the blocking time to a callback of interest. 
\end{proof}


\subsection{End-to-End Response Time across Executors}\label{sec:end-to-end}
The previous sections focused on the response time of a chain on one multi-threaded executor. However, a chain may span across multiple executors where each executor can be either single-threaded or multi-threaded.  To find the end-to-end response time of a chain spanning across multiple executors, one can utilize the Compositional Performance Analysis (CPA) approach, as discussed in \cite{Casini_ECRTS19}. For a set of executors, we can assign a reservation to each single-threaded or multi-threaded executor. Then, we find the response time of individual sub-chain that is executed by one executor. The analysis of such sub-chains can be done using our analysis in this section if executed by a multi-threaded executor, or the analysis from previous work~\cite{Casini_ECRTS19,Tang_RTSS20,PICAS} if executed by a single-threaded executor. One can also use our analysis with $m=1$ for a single-threaded executor, though it would give a more pessimistic upper bound than those single-thread analyses. Then, to find the end-to-end response time of a chain across multiple executors, we can sum up the response time of all sub-chains associated with the chain of interest plus the propagation delay between executors. 
This works because the activation of each sub-chain is triggered by the completion of the preceding sub-chain and can be delayed by the amount of propagation delay. 
It is worth noting that sub-chains on subsequent executors can have release jitters at runtime, e.g., the preceding sub-chain finishes earlier than its worst-case response time. 
One might suspect that such jitters introduce more interference to the other chains; however, our analysis already captures the effect of jitters with $\alpha$ in the workload functions (Eq.~\eqref{eq:chain_workload} and Eq.~\eqref{eq:chain_workload_AD}).

For example, consider a chain $\Gamma_\mathcal{C}$ consisting of $n$ sub-chains spanning across $n$ different executors as shown in Fig.~\ref{fig3}. Due to precedence dependencies among callbacks inside the chain, callbacks of a sub-chain $\Gamma_{\mathcal{C}n}$ can not start their execution until the last callback of the preceding sub-chain $\Gamma_{\mathcal{C}(n-1)}$ is finished. Once the sub-chain $\Gamma_{\mathcal{C}(n-1)}$ finishes, it takes $\delta_{n-1}$ (the propagation delay between executors $n-1$ and $n$) for the first callback of $\Gamma_{\mathcal{C}n}$ to get ready. This pattern applies to all sub-chains of $\Gamma_\mathcal{C}$. Therefore, $R_\mathcal{C}$ can be obtained by the sum of the response time of all sub-chains and their propagation delays, i.e., $R_\mathcal{C} = \sum_{i=1}^{n-1}(R_{\mathcal{C}i} + \delta_i) + R_{\mathcal{C}n}$.

It is possible that two or more sub-chains of the same chain have been assigned to the same executor. If these sub-chains are adjacent to each other, they can be merged into a single sub-chain and analyzed. If they are not adjacent, we can treat them as two independent chains; hence, each sub-chain is considered an interfering chain to the other sub-chain for analysis purposes.

\begin{figure}[t]
\centerline{\includegraphics[width=1\linewidth]{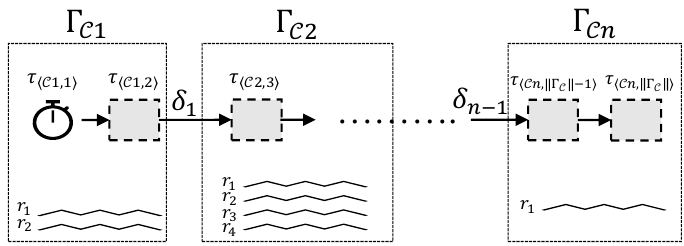}}
\caption{Chain $\Gamma_\mathcal{C}$ spanning across multiple executors}
\label{fig3}
\end{figure}





\section{Evaluation}\label{EVAL}
We evaluate our proposed work through a case study on a real platform and schedulability experiments using randomly-generated workloads. For the case study, we used the Galactic version of ROS 2 on an NVIDIA Jetson AGX Xavier (AGX) platform. Four CPU cores were used in our experiments, each set to run at the maximum frequency (2.2~GHz). 
The proposed priority-driven scheduling enhancements for multi-threaded executors were implemented as modifications to the \texttt{rclcpp} package of ROS 2.\blue{\footnote{Our source code is available at \url{https://github.com/rtenlab/ros2-picas}.}} 


\begin{figure}[t]
\centerline{\includegraphics[width=1\linewidth]{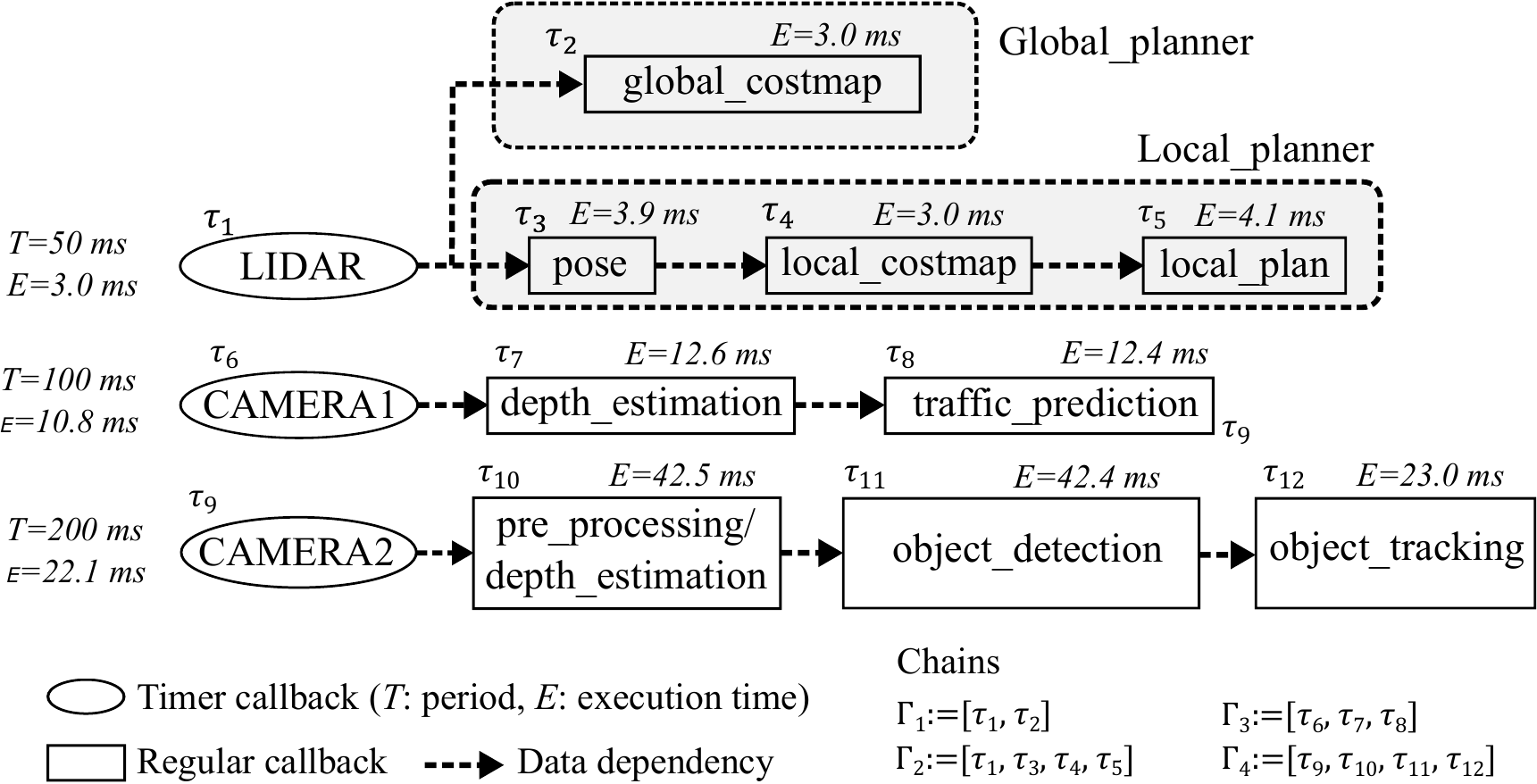}}
\caption{Chain set for case study}
\label{ex1}
\end{figure}



\begin{figure}[h]
	\centering
	\subfloat[R-D]{		\includegraphics[width=0.49\linewidth]{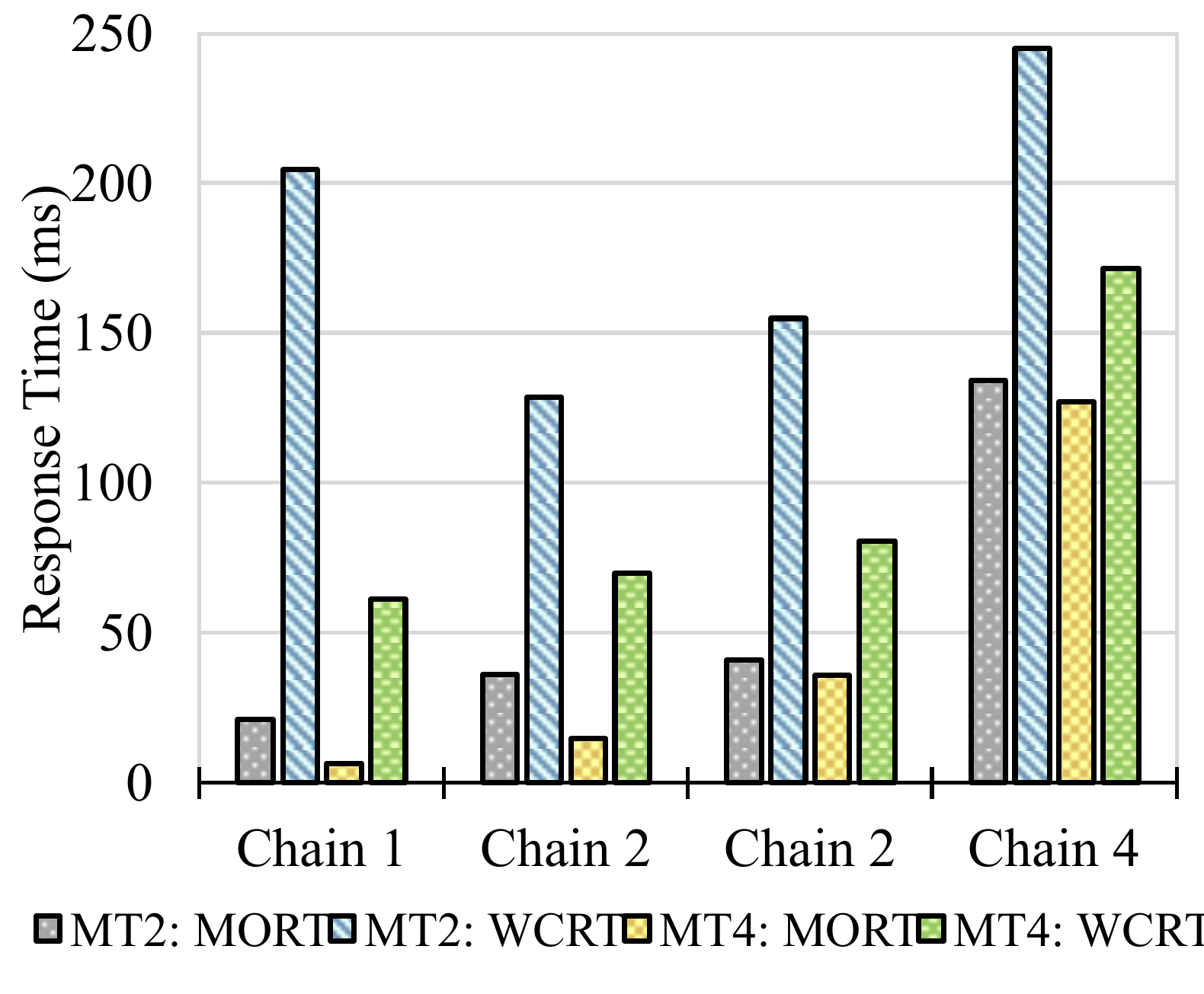}
	}
	\subfloat[R-P]{		\includegraphics[width=0.49\linewidth]{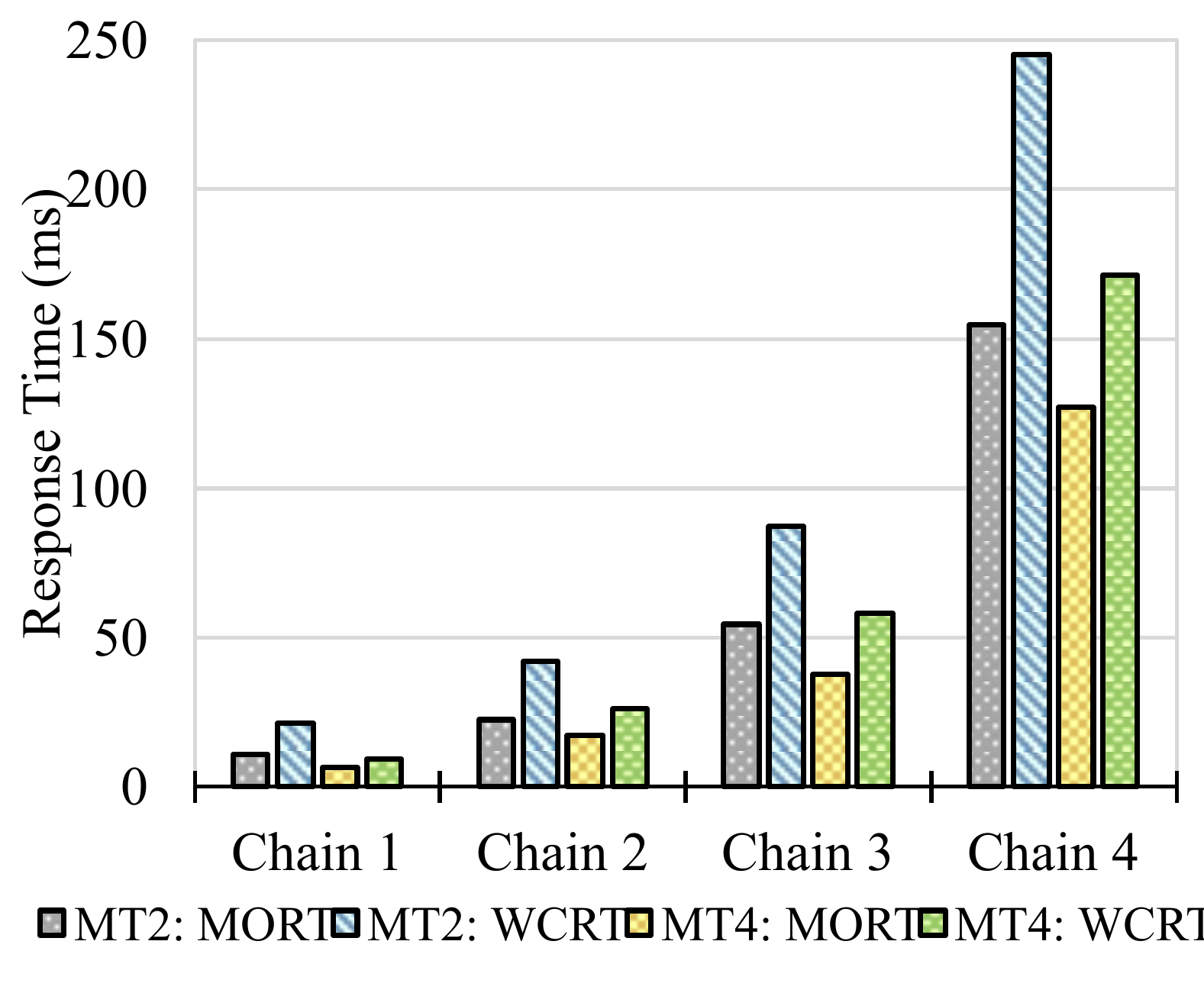}
	}\\
    \subfloat[ME-D]{
		\includegraphics[width=0.49\linewidth]{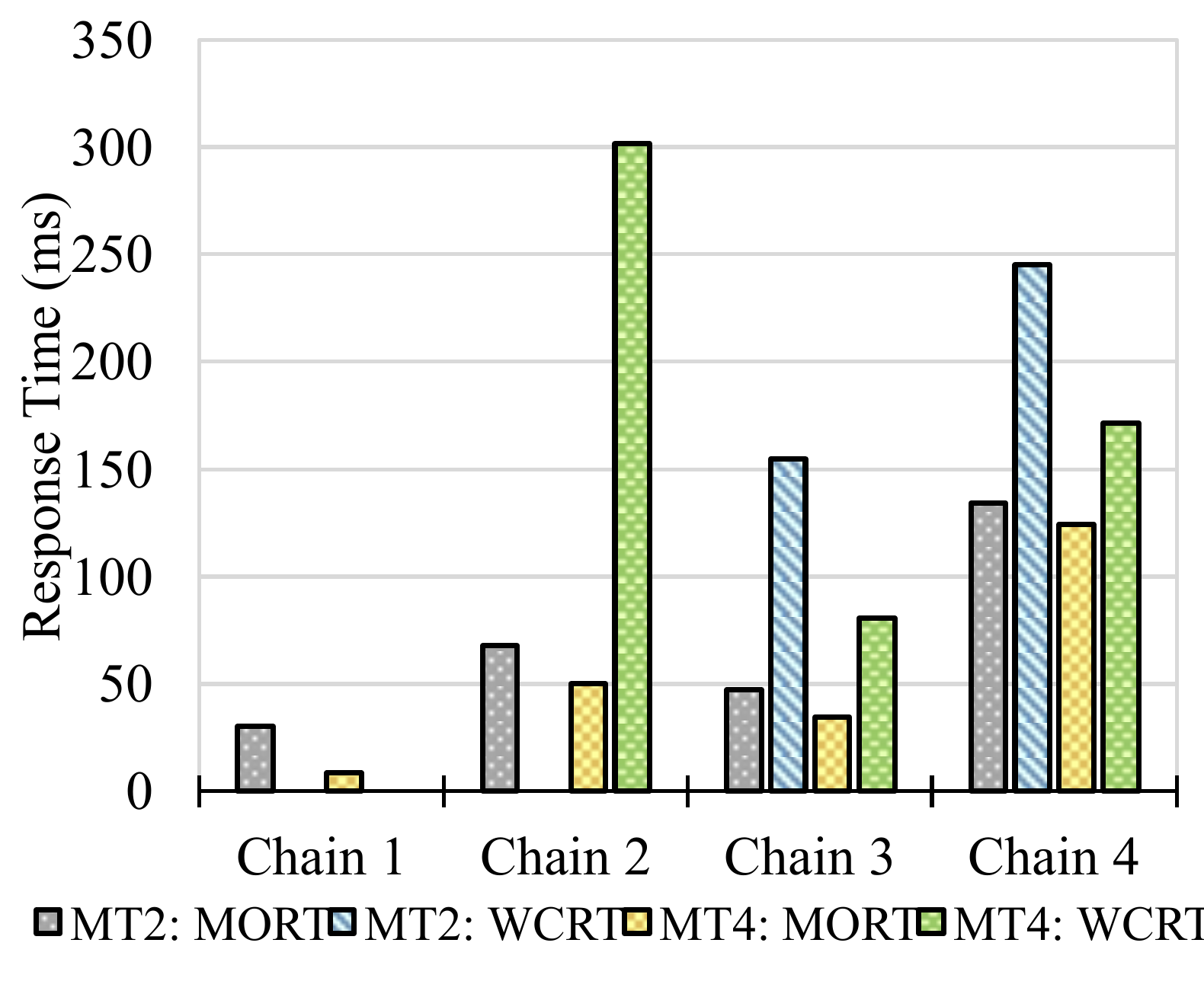}
	}
    \subfloat[ME-P]{
		\includegraphics[width=0.49\linewidth]{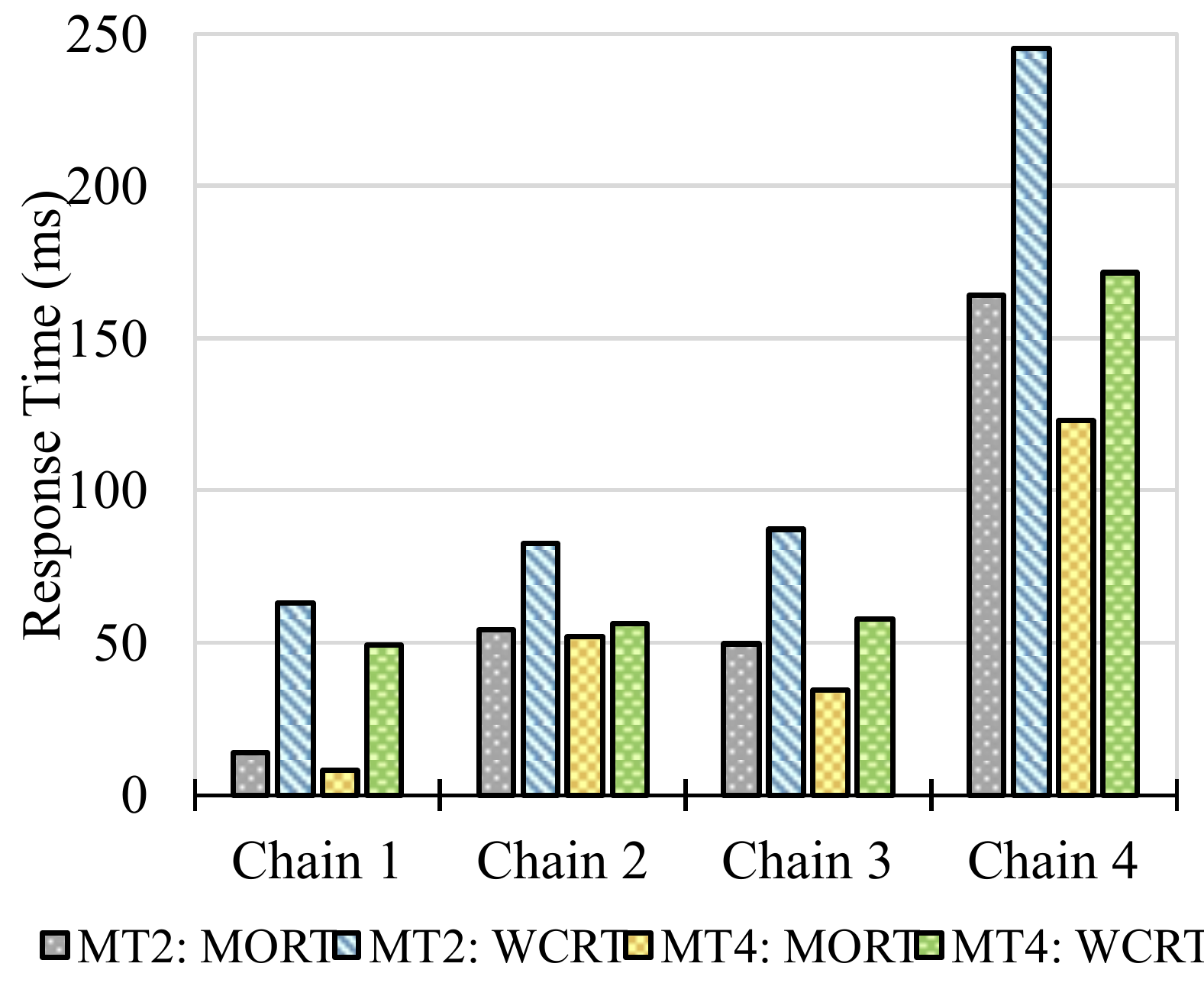}
	}
	\caption{Comparison of observed and analyzed response times}
	\label{fig:mort_wcrt}
\end{figure}

\begin{figure}[t]
	\subfloat[2*MT2-R-D]{		\includegraphics[width=0.49\linewidth]{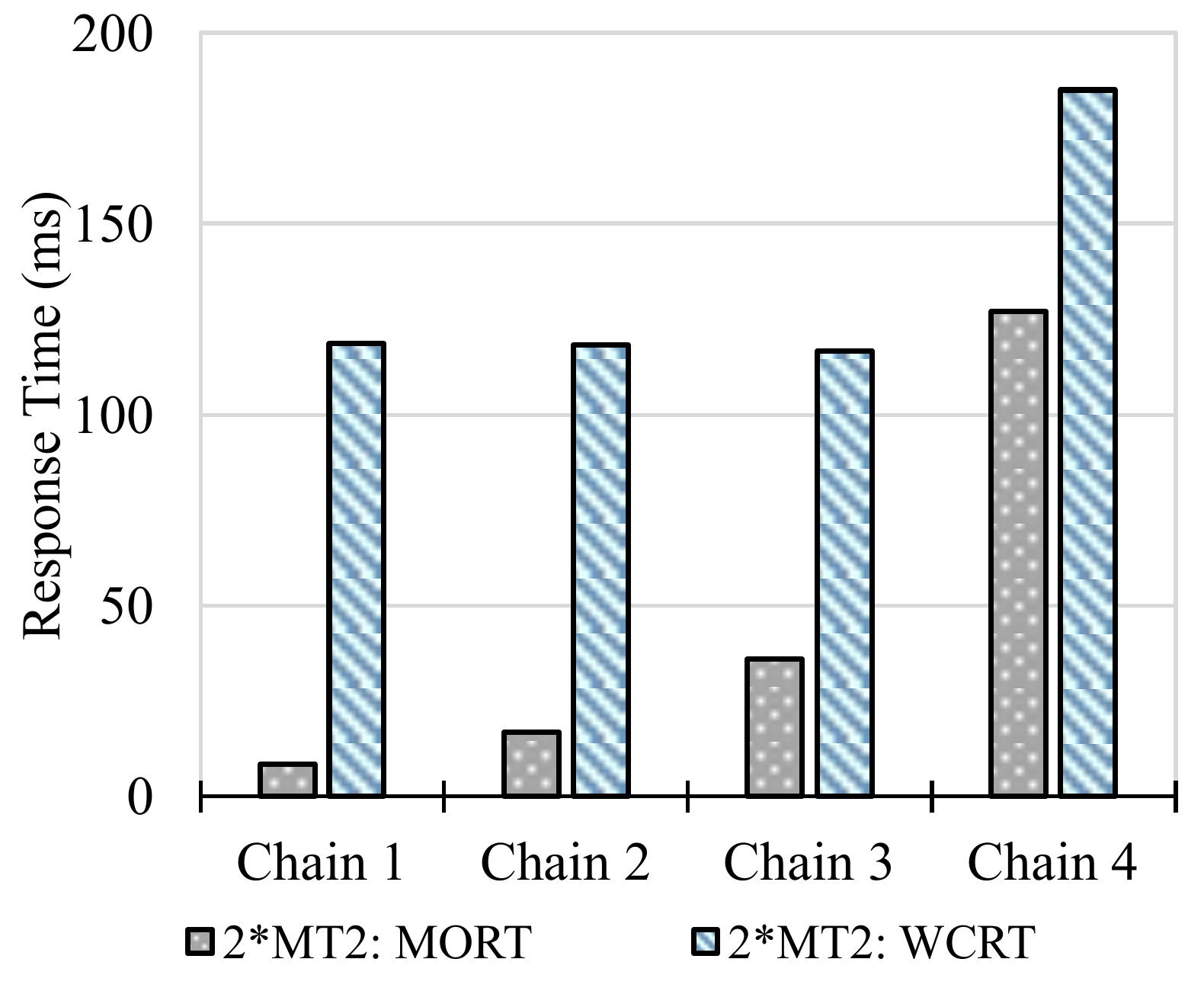}
	}
	\subfloat[2*MT2-R-P]{		\includegraphics[width=0.49\linewidth]{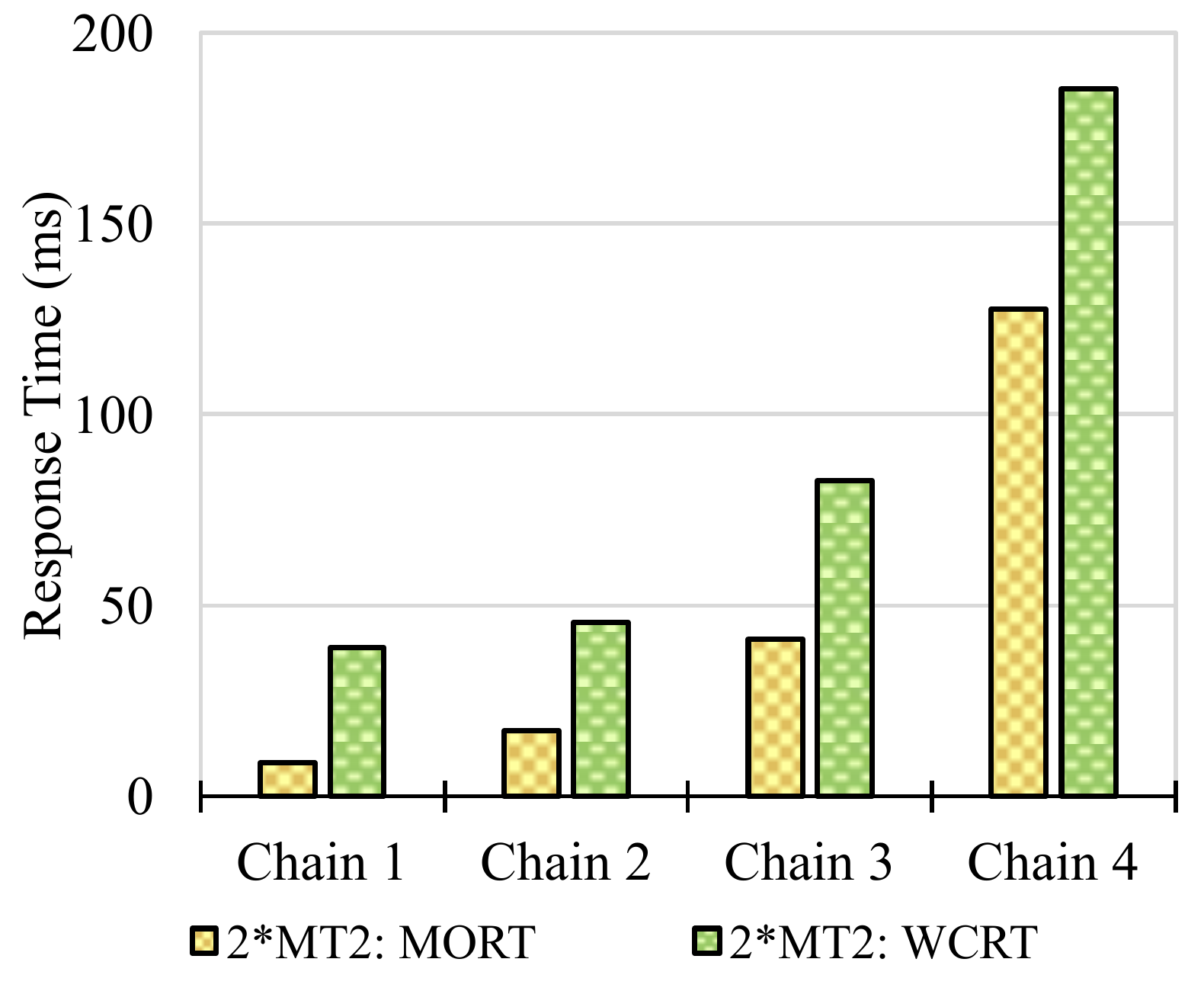}
	}    
\caption{Chains across multiple executors}
\label{fig:multi}
\end{figure}


\subsection {Case Study on AGX}
\label{sec:case_study}
The purpose of this case study is to understand the performance characteristics of the ROS 2 multi-threaded executor 
and to compare the observed response time against the worst-case bounds obtained by analysis. Fig~\ref{ex1} depicts the chain set used here, which is inspired by an autonomous robotic system. It consists of four chains: $\Gamma_1$ ($\pi_1 = 4$; highest priority), $\Gamma_2$ ($\pi_2 = 3$), $\Gamma_3$ ($\pi_3 = 2$), and $\Gamma_4$ ($\pi_4 = 1$; lowest priority). We ran this chain set for 5 minutes on AGX under each executor configuration and took the 99th percentile as the maximum observed response time (MORT) of each chain. 
For the executor configurations, we considered a multi-threaded executor with $m$ threads, where $m \in \{ 2, 4\}$.
The executor is set to use cores equal to the number of threads they have; hence, $sbf_\Pi(\Delta)=m\cdot \Delta$.
For each case, we also tested with and without our priority-driven scheduling enhancements and mutually-exclusive callback groups. We also considered cases where the chain set spans across two multi-threaded executors (each with two threads) with and without priority-driven enhancements. 
The total number of executor configurations is therefore $2\times 4 + 2 = 10$.


For analytical bounds on the worst-case response time (WCRT), we considered our analyses for constrained-deadline chains on both standard (default) and priority-driven multi-threaded executors, assuming all callbacks are in a reentrant callback group (Theorems~\ref{tm:PWA_CD} and \ref{tm:PPWA_CD}). We also examined our analyses for mutually-exclusive callback group (Theorem.~\ref{tm:group} and ~\ref{tm:group_p}) by gathering callbacks of the chains $\Gamma_1$ and $\Gamma_2$ in a mutually-exclusive group and gathering callbacks of the chains $\Gamma_3$ and $\Gamma_4$ in a reentrant group. Also, to evaluate our end-to-end latency analysis for chains spanning across multiple executors (Sec.~\ref{sec:end-to-end}), we assigned callbacks \{$\tau_1, \tau_3, \tau_6, \tau_7, \tau_9, \tau_{10}$\} to a 2-threaded executor and assigned the rest \{$\tau_2, \tau_4, \tau_5, \tau_8, \tau_{11}, \tau_{12}$\} to another 2-threaded executors. 
Since these two 2-threaded executors run on four cores of the same processor, we assumed the propagation delay between two executors is negligibly small. 


Figs.~\ref{fig:mort_wcrt} and \ref{fig:multi} compare the MORT and WCRT of each chain under different executor conditions. The caption of each sub-graph represents: the first part ``R'' and ``ME'' for the aforementioned settings of reentrant and mutually-exclusive callback groups, respectively; and, the following ``D'' and ``P'' for the ROS 2 default and the priority-driven scheduling schemes. The legend denotes: ``MT\#'' for a multi-threaded executor with \# threads (e.g., MT4 is a 4-threaded executor); ``2*MT2'' for the case of two 2-threaded executors.



\begin{figure}[t]
	\centering
	\subfloat[Chain $\Gamma_1$]{
		\includegraphics[width=0.95\linewidth]{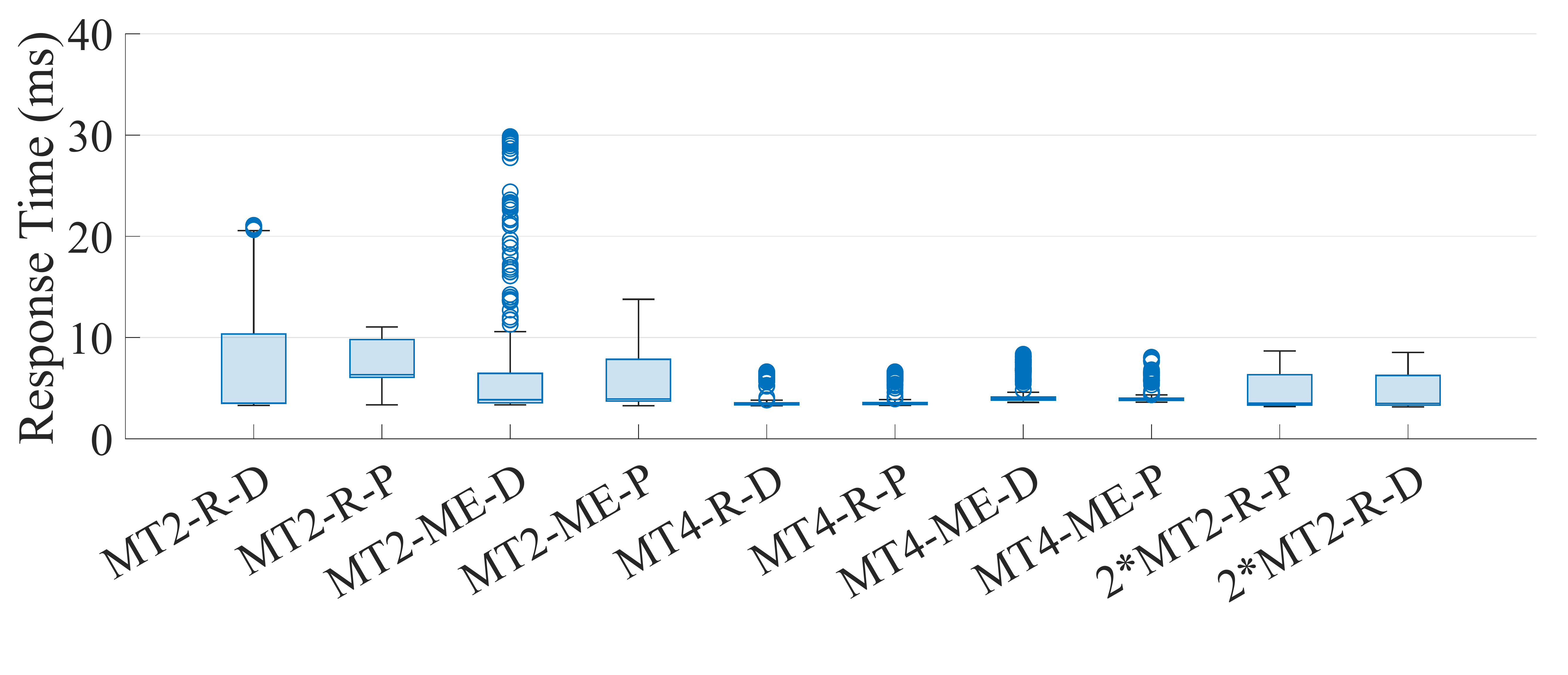}
	}\\
	\subfloat[Chain $\Gamma_2$]{
		\includegraphics[width=0.95\linewidth]{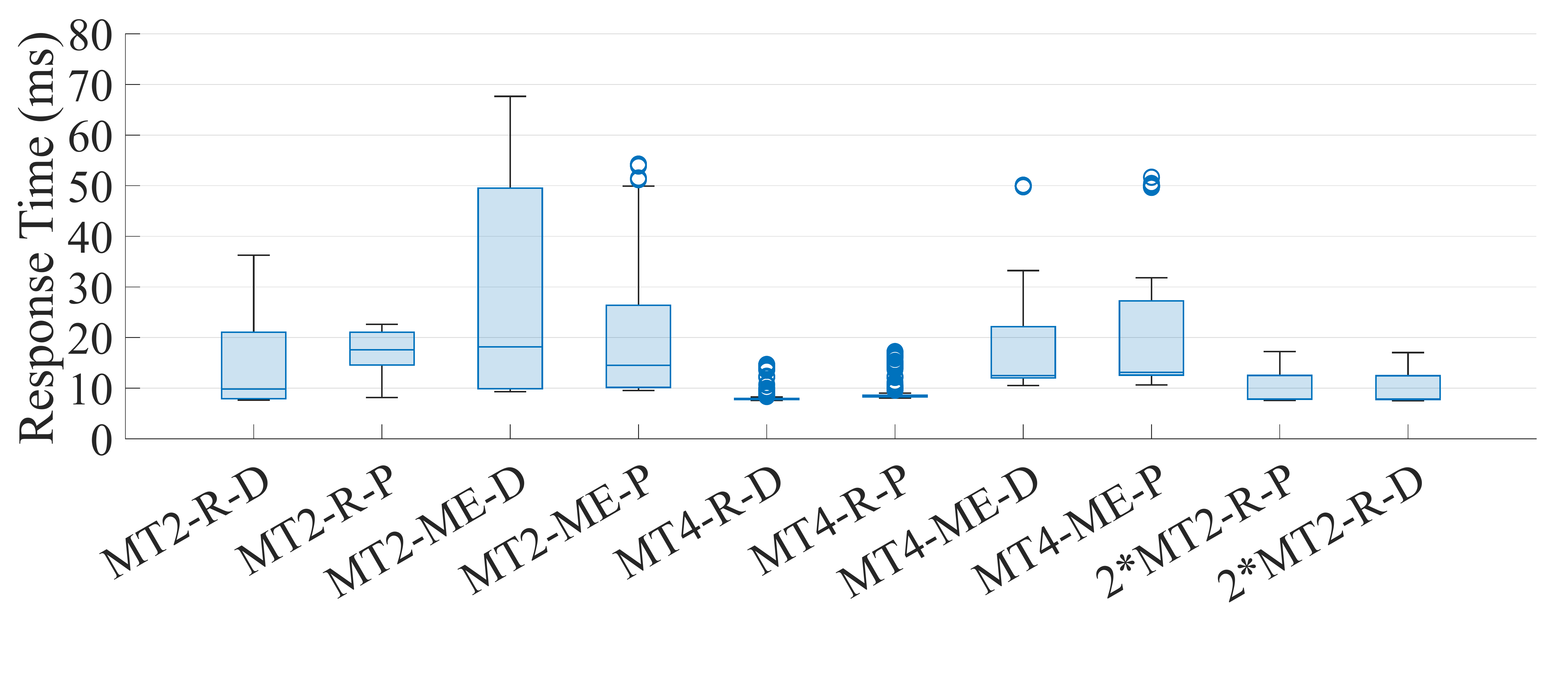}
	}\\
    \subfloat[Chain $\Gamma_3$]{
		\includegraphics[width=0.95\linewidth]{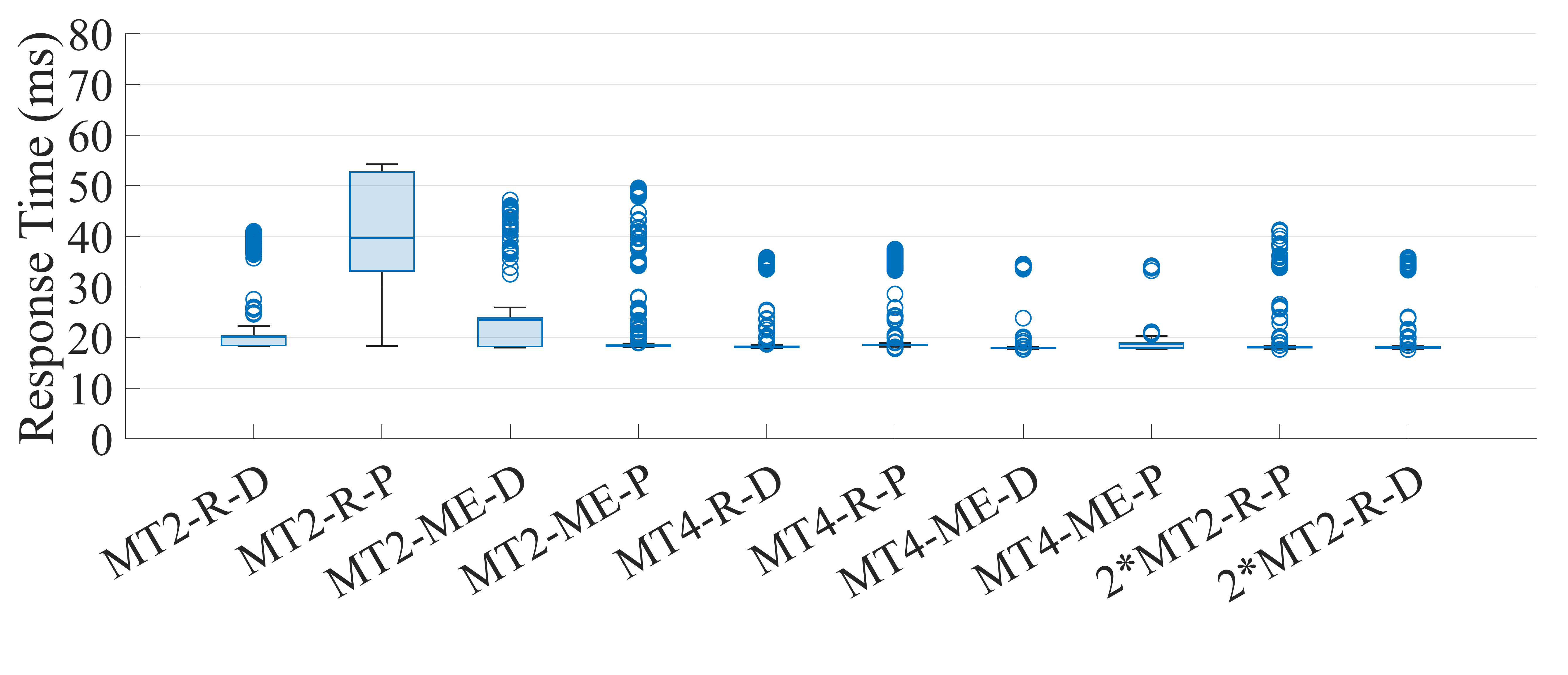}
	}\\
    \subfloat[Chain $\Gamma_4$]{
		\includegraphics[width=0.95\linewidth]{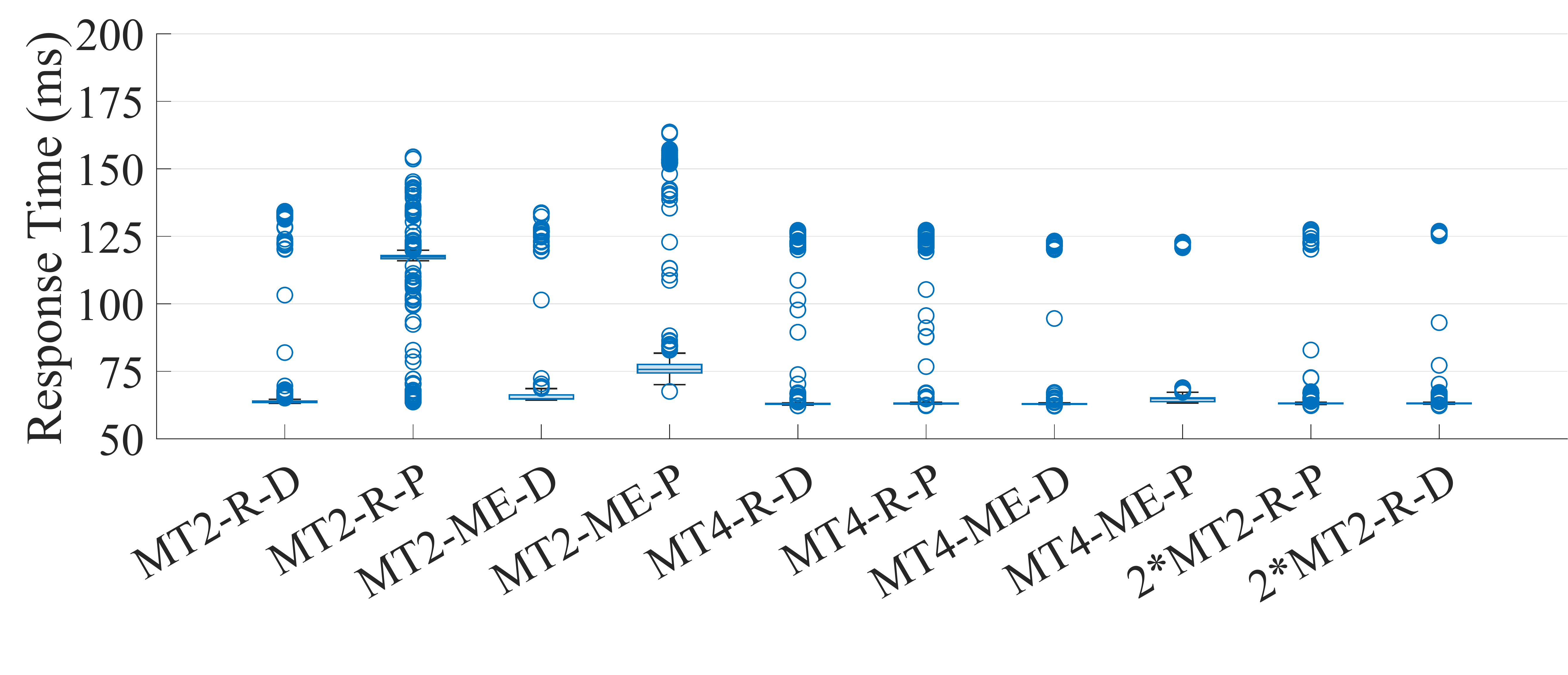}
	}
	\caption{Observed response time of chains}
	\label{fig:casestudy1}
\end{figure}


In all reentrant cases, our analysis could safely upper-bound the MORT. In the mutually-exclusive cases, when the ROS 2 default scheduler is selected, our analysis returns unbounded response time for chain 1 (with the MT2 and MT4 settings) and chain 2 (with the MT2 setting) due to the extra blocking time caused by their group-mate callbacks. For other chains using the ROS 2 default scheduler, or for all chains when the ROS 2 priority-driven scheduler is applied, our ME analysis perfectly upper-bounds the MORT. The priority-driven scheduling scheme outperforms the default ROS 2 scheduler in both reentrant and mutually-exclusive callback groups, especially for high-priority chains. These results indicate that both MORT and WCRT are reduced for high-priority chains with priority-driven scheduling and this enhancement not only reduces analysis pessimism but also improves the response time of critical chains at runtime. Although, in some cases, the response time of low-priority chains slightly increases with priority-driven scheduling, this is the cost to improve the response time of high-priority ones.


\begin{figure}[t]
\centerline{\includegraphics[width=1\linewidth]{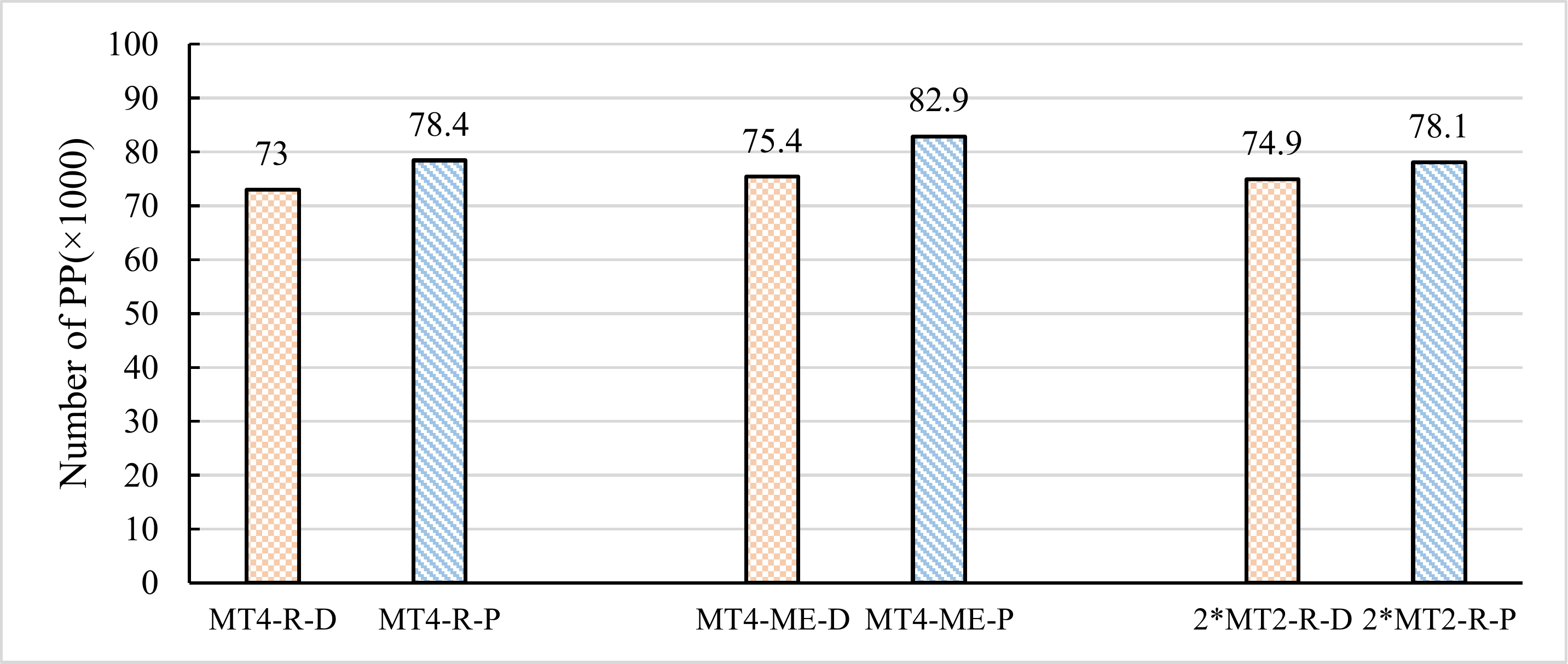}}
\caption{Overhead w.r.t. the number of polling points (PP)}
\label{fig:overhead}
\end{figure}

Fig.~\ref{fig:casestudy1} illustrates more details on the observed response time distributions of the four chains under various executor configurations. 
It is clearly shown that the use of priority-driven scheduling reduces variations in the observed response time of critical (high-priority) chains, thereby improving perceived predictability.

To assess the overhead of our priority-driven scheduling scheme, we measured the number of ReadySet updates during 5 minutes of running the case study under various executor configurations. 
Fig.~\ref{fig:overhead} illustrates the results.
As can be seen, the priority-driven scheduling scheme introduces less than 10\% of additional ReadySet updates compared to the default scheme. 

\subsection{Schedulability Experiments}
In this section, we explore the schedulability ratio of the proposed RTA framework using randomly-generated chain sets with constrained and arbitrary deadlines. 
The following abbreviations are used for each of the analysis approaches:
\begin{itemize}
\item \textbf{PWA\_CD}: Proposed Worst-case Analysis (PWA) for chains with Constrained Deadlines (CD) -- Theorem~\ref{tm:PWA_CD}
\item \textbf{PPWA\_CD}: Priority-driven PWA for CD  -- Theorem~\ref{tm:PPWA_CD}
\item \textbf{PWA\_AD}: PWA for Arbitrary Deadlines (AD) -- Theorem~\ref{tm:PWA_AD}
\item \textbf{PPWA\_AD}: Priority-driven PWA for AD -- Theorem~\ref{tm:PPWA_AD}
\end{itemize}

We used the UUniFast algorithm \cite{bini2005measuring} with a total utilization from 0.8 to 4.0 with the step of 0.4 to generate 1,000 random chain sets, where each set includes 5 chains and each chain has 10 callbacks. At first, chains were generated for the constrained-deadline case with $T_\mathcal{C}=D_\mathcal{C}$ using the UUniFast algorithm. Then, for arbitrary deadlines, we duplicated these chains and doubled their deadlines. The number of threads is set to $m=4$.

\begin{figure}[t]
\centerline{\includegraphics[width=1\linewidth]{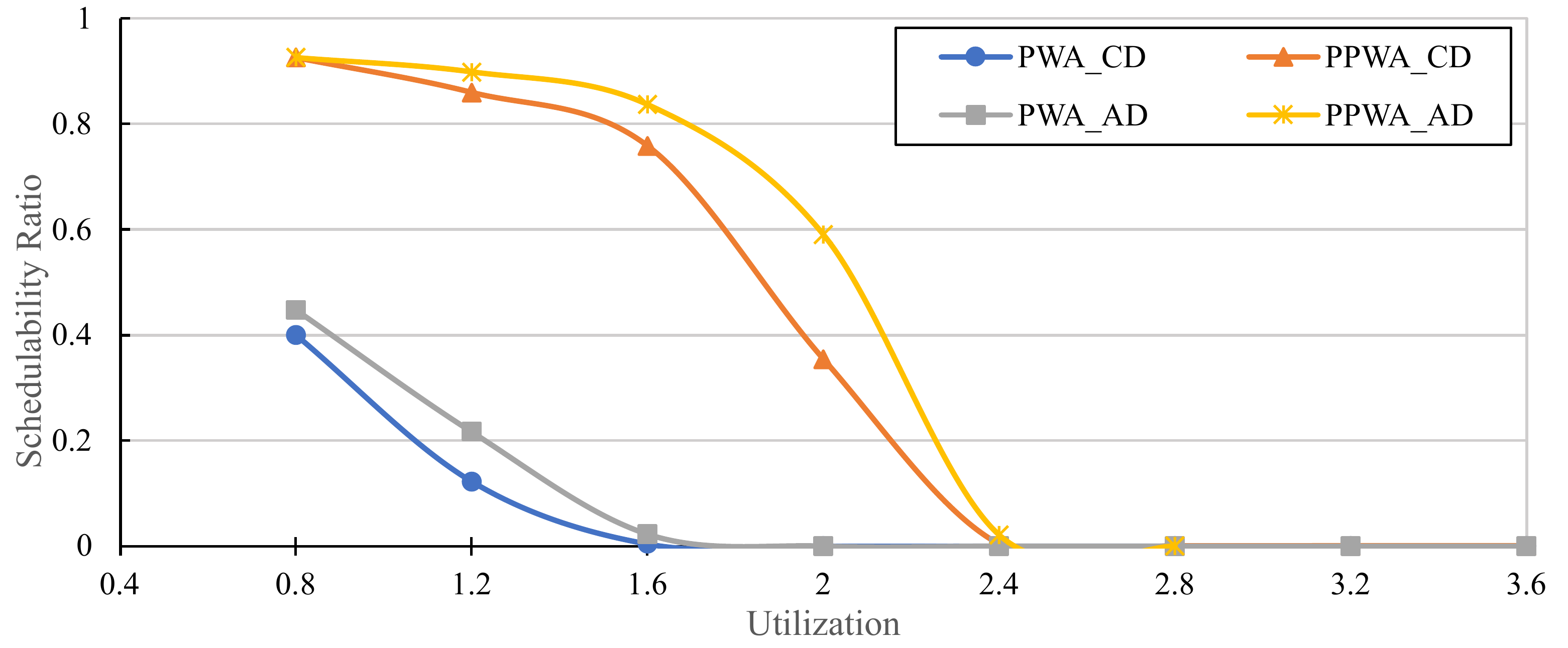}}
\caption{Varying utilization of chain sets}
\label{fig:varying_utilization}
\end{figure}

Fig.~\ref{fig:varying_utilization} shows the schedulability ratio of chain sets by our four analysis approaches as the chain-set utilization increases. 
The schedulability ratio might be seen as rather low for the given utilization. However, it is worth noting that, in chain scheduling, the precedence dependencies among callbacks cause resource waste and therefore lead to a less schedulability ratio than conventional task scheduling with no dependencies.
In general, priority-driven scheduling (PPWA\_CD and PPWA\_AD) significantly outperforms the default ROS 2 scheduling policy (PWA\_CD and PWA\_AD), with as much as 55\% point higher in schedulability. This is primarily due to the fact that high-priority chains experience less delay from lower-priority chains under PPWA since it fetches ready callbacks immediately and schedules them strictly based on their assigned priority. The results of AD cases look slightly better than their CD counterparts. However, given that AD cases are tested with chain sets with doubled deadlines, it can be said that AD analysis has much higher pessimism than CD analysis. This is somewhat expected from our analysis in Theorem~\ref{tm:PPWA_AD} which had to take into account multiple outstanding (= released and unfinished) instances of each chain.

We also explored the schedulability ratio of our analyses as the number of threads increases. Similar to above, we generated 1,000 random chain sets, each including 5 chains with 10 callbacks per chain. The utilization of each chainset is fixed at 1.0 in this experiment. The results are shown in Fig.~\ref{fig:varying_number_of_threads}. Note that scheduling a chain set with a utilization equal to 1.0 on a single thread is almost infeasible due to the precedence dependencies among callbacks. The schedulability ratio improves with a more number of threads, but the degree of improvement appears differently for each analysis method. For PPWA\_CD and PPWA\_AD, the schedulability ratio increases sharply at $m=2$ and becomes almost plateau afterward. On the other hand, the increase of PWA\_CD and PWA\_AD is slower and is almost linear to the number of threads. We can see that our proposed priority-driven enhancement can achieve better schedulability with a fewer number of threads.

\begin{figure}[t]
\centerline{\includegraphics[width=1\linewidth]{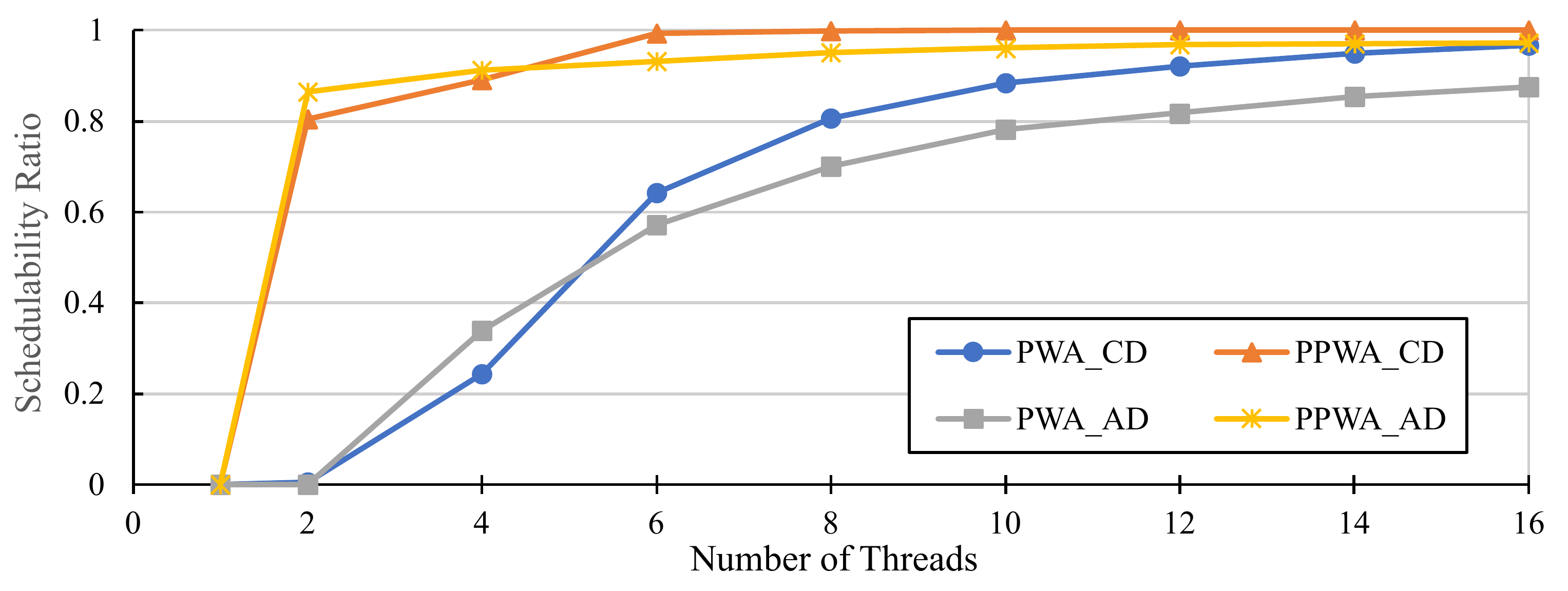}}
\caption{Varying number of threads}
\label{fig:varying_number_of_threads}
\end{figure}

\begin{figure}[t]
\centerline{\includegraphics[width=1\linewidth]{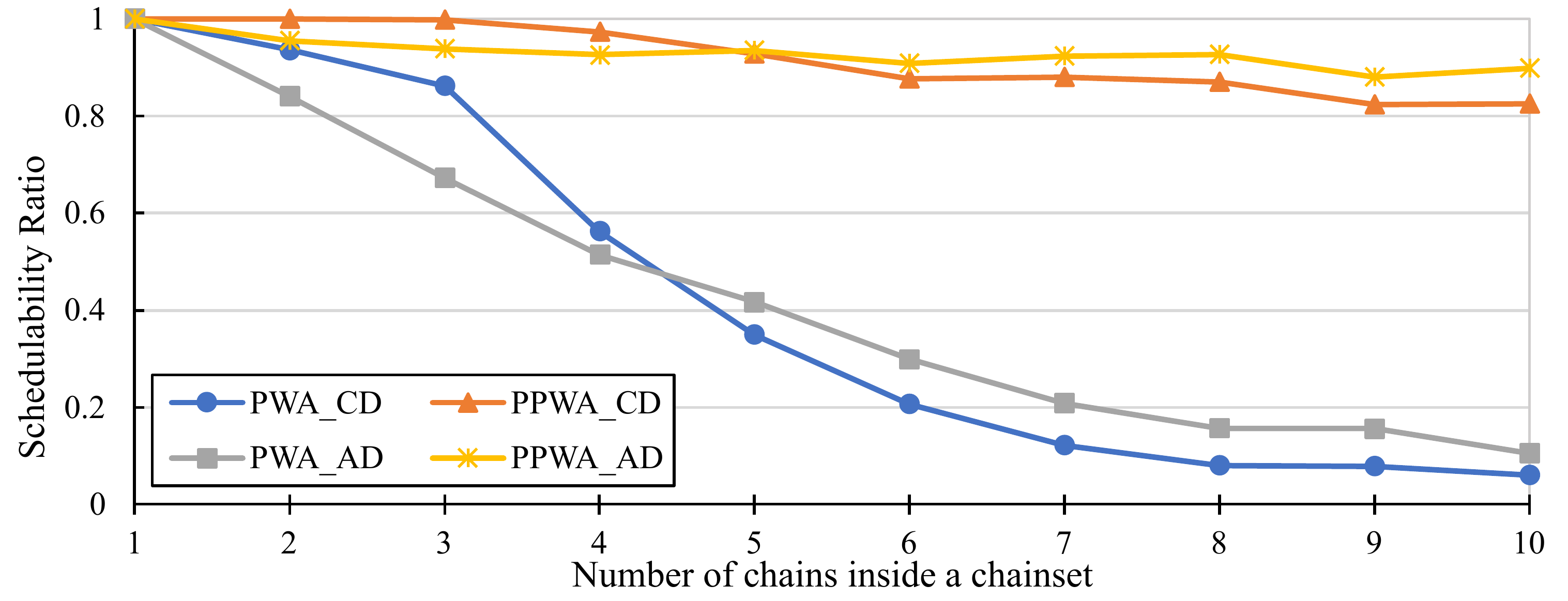}}
\caption{Varying number of chains}
\label{fig:sched_C}
\end{figure}

Lastly, we explored the impact of the number of chains on schedulability. Here, we kept the total utilization to 1 and varied only the number of chains. Each generated chain has 10 callbacks, regardless of how many chains are generated for each chain set. Fig.~\ref{fig:sched_C} depicts the results. The schedulability ratio decreases with the number of chains for all analysis methods, meaning interference increases with the number of chains although the total utilization remains the same.
Recall that, due to precedence dependencies, only one callback of each chain instance can interfere with the chain under analysis. However, by increasing the number of chains, the number of callbacks from chains increases, and also the number of chain instances in a time interval $\Delta$ could increase. However, Fig.~\ref{fig:sched_C} shows that priority-driven scheduling is less affected by this issue because it limits the source of interference to only higher-priority chains. We therefore conclude that priority-driven scheduling brings a significant benefit to real-time chains on ROS 2 multi-threaded executors.



\section{Conclusion}
In this paper, we proposed a comprehensive response-time analysis framework for chains running on ROS 2 multi-threaded executors. We analyzed the timing behavior of the default scheduling scheme of ROS 2 multi-threaded executors and proposed priority-driven scheduling as an improvement over the default one. Our framework can analyze chains with both arbitrary and constrained deadlines and can take into account the effect of mutually-exclusive callback groups. 
We conducted the evaluation with a case study on NVIDIA Jetson AGX Xavier and schedulability experiments using randomly-generated chains. The results demonstrate that our analysis framework can safely upper-bound response times under various conditions. In addition, our priority-driven scheduling for ROS 2 multi-threaded executors not only reduces the response time of critical chains but also improves analysis results. 

As ROS 2 is becoming more popular in academia and industry for more complex robotic systems with safety-critical features, more active research efforts need to be made to ensure timing correctness and improve system efficiency in ROS-based systems. There are several interesting future directions, including real-time support for accelerators, synchronization, memory-induced interference, which have been studied for conventional systems but not in the context of ROS or similar middleware environments. We believe that our work fills an important knowledge gap in this area and more interesting work could be built upon our framework. 

\section*{Acknowledgment}
This work was sponsored by the National Science Foundation (NSF) grant 1943265 and the Office of Naval Research (ONR) grant N00014-19-1-2496.


\bibliographystyle{IEEEtran}
\bibliography{references.bib}

\end{document}